\def\Return{\State\textbf{return}\ }
\newcommand{\sep} {\mathrm{sep}\,}
\newcommand{\init} {\mathrm{init}\,}
\newcommand{\ord} {\mathrm{ord}\,}
\newcommand{\coeff}{\,\mathrm{coeff}\,}
\DeclareMathOperator{\im}{im}
\DeclareMathOperator{\bx}{\mathbf{x}}
\DeclareMathOperator{\by}{\mathbf{y}}
\DeclareMathOperator{\bg}{\mathbf{g}}
\DeclareMathOperator{\bz}{\mathbf{z}}
\DeclareMathOperator{\bu}{\mathbf{u}}
\DeclareMathOperator{\balpha}{\bm{\alpha}}
\DeclareMathOperator{\bbeta}{\bm{\beta}}
\DeclareMathOperator{\bmu}{\bm{\mu}}
\DeclareMathOperator{\KK}{\mathbb{K}}
\newcommand{\ideal}[1]{\ensuremath{( #1 )}}
\newcommand{\tuple}[1]{\ensuremath{[#1]^{T}}}
\DeclareMathOperator{\initial}{in}
\newtheorem{notation}{Notation}
\begin{document}
\title{Support bound for differential elimination in polynomial dynamical systems}
%
%
\author{Yulia Mukhina \and
Gleb Pogudin}
\authorrunning{Y. Mukhina and G. Pogudin}
%
\institute{LIX, CNRS, \'Ecole polytechnique, Institute Polytechnique de Paris, Paris, France
\email{\{yulia.mukhina,gleb.pogudin\}@polytechnique.edu}}
\maketitle              
\begin{abstract}
We study an important special case of the differential elimination problem: given a polynomial parametric dynamical system $\bx' = \bg(\bmu, \bx)$ and a polynomial observation function $y = f(\bmu,\bx)$, find the minimal differential equation satisfied by $y$.
In our previous work \cite{mukhina2025projecting}, for the case $y = x_1$, we established a bound on the support of such a differential equation for the non-parametric case and showed that it can be turned into an algorithm via the evaluation-interpolation approach.
The main contribution of the present paper is a generalization of the aforementioned result in two directions: to allow any polynomial function $y = f(\bx)$, not just a single coordinate, and to allow $\bg$ and $f$ to depend on unknown symbolic parameters.
We conduct computation experiments to evaluate the accuracy of our new bound and show that the approach allows to perform elimination for some cases out of reach for the state of the art software.

\keywords{differential elimination  \and Newton polytope \and evaluation-interpolation \and polynomial dynamical system.}
\end{abstract}
\section{Introduction}

Elimination is a classical operation in computer algebra extensively studied for different classes of systems of equations.
For a given system $\mathbf{F}(\bx, \by) = 0$ in two groups of unknowns $\mathbf{x} = \tuple{x_1, \ldots, x_s}$ and $\mathbf{y} = \tuple{y_1, \ldots, y_\ell}$, the goal of elimination is to describe nontrivial equations of the form $\mathbf{g}(\by) = 0$ implied by the original system.
Prominent examples include Gaussian elimination for linear equations or resultants and Gr\"obner bases for polynomial systems.

The elimination problem for systems of algebraic differential equations goes back at least to the works of J. Ritt in the 1930s \cite{Ritt}.
Since then, it has been studied from the point of view of algorithmic development \cite{WANG2002,Boulier1,Boulier2,Hubert2000,Hubert2003b,diff_Thomas,Robertz2014,Rueda2016,simmons2023differentialeliminationalgebraicinvariants}, software implementation \cite{WANG2002,TDDS,blad}, and complexity analysis \cite{Gustavson2018,Grigorev1989,Li2015}, and it has been used in different application domains \cite{Boulier2007,Pascadi2021,Gerdt2019}.
Recently, particular attention has been drawn \cite{hong2020global,Dong2023,AitElManssour2025,Meshkat2012} to elimination for systems in \emph{the state-space form} which naturally arise in many applications in modeling and control:
\begin{equation}\label{eq:state_space}
\bx' = \mathbf{g}(\bx, \bu), \quad \by = \mathbf{f}(\bx, \bu),
\end{equation}
where $\bx$, $\by$, and $\bu$ are sets of differential variables describing the internal state, the observed output, and the external input, respectively.
In this setup, one is typically interested in eliminating the state variables $\bx$ because there is no experimental data available for them directly.

Until recently, both general-purpose elimination algorithms and those tailored specifically to the systems \eqref{eq:state_space} are based on arithmetic manipulations with the input differential equations and, thus, suffer from the \emph{intermediate expression swell}.
In our recent paper \cite{mukhina2025projecting}, we have proposed the following workaround building upon the observation that, for systems of the form \eqref{eq:state_space}, truncated power series solutions can be easily computed for any given initial conditions.
We have established a bound on the support of the resulting elimination polynomial \cite[Theorem 1]{mukhina2025projecting}, made an ansatz, and used power series solutions to find the unknown coefficients by solving a system of linear equations.
We have shown that the bounds we obtain are often sharp \cite[Theorems 2 and 3]{mukhina2025projecting}, and the resulting straightforward algorithm was able to perform elimination for systems out of reach for prior methods \cite[Section 9]{mukhina2025projecting}.
While the results obtained in \cite{mukhina2025projecting} provide a strong indication of the viability of the approach, it was restricted to a subclass of systems \eqref{eq:state_space} without $\mathbf{u}$, with $\mathbf{g}$ being polynomials, and with $\mathbf{f}$ being a scalar equal to one of the coordinates of $\bx$.

The main theoretical contribution of the present paper, Theorem \ref{th::bound-param}, is an extension of the bound from \cite{mukhina2025projecting} in two directions:
\begin{itemize}
    \item $\mathbf{f}$ is still a scalar but now can be an arbitrary polynomial in $\bx$;
    \item both $\mathbf{g}$ and $\mathbf{f}$ can depend on constant parameters $\bmu$, and the support is considered with respect to the parameters as well. 
\end{itemize}
The former extension not only allows nontrivial observation functions arising in modeling but also makes the bound applicable to effective computations with D-algebraic functions \cite{AitElManssour2025,kauers2025boundsdalgebraicclosureproperties,TeguiaTabuguia2025} (see Example \ref{ex:dalg_functions}).
The latter extension is the first step towards using the new elimination approach in the context of the structural parameter identifiability which was one of important recent applications of differential elimination \cite{Dong2023,Meshkat2012}.
The proof of the new bound builds upon and refines the techniques developed in \cite{mukhina2025projecting}, in particular, we employ multihomogeneous B\'ezout bound instead of the standard one. 

We show experimentally that the obtained bound is sharp in the non-parametric case and quite accurate in the other cases.
Furthermore, we use the updated bound to extend the implementation from \cite{mukhina2025projecting} accordingly and evaluate its performance. 
The new version of software is available at
\begin{center}
    \url{https://github.com/ymukhina/Loveandsupport/tree/y-input}
\end{center}

The rest of the paper is organized as follows.
Section \ref{sec:preliminaties} contains preliminaries on differential algebra and introduces formally the class of the systems considered in this paper.
In Section \ref{sec:main} we formulate the main theoretical result of the paper, a bound on the support of the result of elimination.
Section \ref{sec:experimental} summarizes the experimental study of the accuracy of the bound and poses some conjectures based on the given data.
The proof of the bound is given in Sections \ref{sec:proof1}-\ref{sec:proof2}.
We recall the elimination algorithm from \cite{mukhina2025projecting} and explain how it can be extended using the new bound in Section \ref{sec:algorithm}.
Section \ref{sec:implementation} describes the implementation of the extended algorithm and its performance.

\paragraph{Acknowledgements} The authors are grateful to Bertrand Teguia Tabuguia for useful discussion, in particular, related to Example \ref{ex:dalg_functions}.  The authors also thank Rafael Mohr for helpful conversations.
This work has been supported by the French ANR-22-CE48-0008 OCCAM and ANR-22-CE48-0016 NODE projects.



\section{Preliminaries}
\label{sec:preliminaties}
\begin{definition}[Differential rings and ideals]
\begin{itemize}
    \item A \emph{differential ring} $(R, \;')$ is a commutative ring with a derivation $': R \rightarrow R$, that is, a map such that, for all $a,b \in R$, $(a+b)' = a' +b'$ and $(ab)' = a'b + ab'$.  
    For $i > 0$, $a^{(i)}$ denotes the $i$-th order derivative of $a \in R$.
   \item A \emph{differential field} is a differential ring which is a field.
   \item Let $R$ be a differential ring. An ideal $I \subset R $ is called a \emph{differential ideal} if $a' \in I$ for every $a \in I$.
\end{itemize}
\end{definition}

For the rest of the paper, $\mathbb{K}$ will be a differential field of characteristic zero.

Let $x$ be an element of a differential ring. 
We denote $x^{(\infty)} := \{x,x',x'',x^{(3)},\ldots\}.$

\begin{definition}[Differential polynomials]
    Let $R$ be a differential ring. Consider a ring of
polynomials in infinitely many variables
\[
R[x^{(\infty)}] := R[x, x', x'', x^{(3)},\ldots]
\]
and extend \cite[§ 9, Prop. 4]{bourbaki1950elements} the derivation from $R$ to this ring by $(x^{(j)})':= x^{(j+1)}$. The resulting differential ring is called the \emph{ring of differential polynomials in $x$ over $R$}.
The ring of differential polynomials in several variables is defined by iterating this construction.
\end{definition}

\begin{notation}
    One can verify that $\ideal{f_1^{(\infty)},\ldots, f_s^{(\infty)}}$ is a differential ideal 
    for every $f_1, \ldots , f_s \in  R[x_1^{(\infty)},\ldots, x_n^{(\infty)}] $. Moreover, this is the minimal differential ideal containing $f_1, \ldots , f_s$,
    and we will denote it by $\ideal{f_1,\ldots, f_s}^{(\infty)}$.
\end{notation}

\begin{notation}[Saturation]
Let $I$ be an ideal in the ring $R$, and $a \in R$. Denote
$$
I : a^{\infty} := \{ b \in R\; | \; \exists N \colon\, a^Nb \in I\}.
$$
The set $I : a^{\infty}$  is also an ideal in $R$.
If $I$ is a differential ideal, then $I : a^{\infty}$ is also a differential ideal.
\end{notation}

 \begin{definition}
    Let $P \in \mathbb{K}[\bx^{(\infty)}]$ be a differential polynomial in $\bx = \tuple{x_1, \ldots, x_n}$.
    \begin{enumerate}
        \item For every $1 \leqslant i \leqslant n$, we will call the largest $j$ such that $x_i^{(j)}$ appears in $P$ the \emph{order} of $P$ 
        respect to $x_i$ and denote it by $\ord_{\!x_i} P$; if $P$ does not involve $x_i$, we set $\ord_{\!x_i} P := -1$.

        \item For every $1 \leqslant i \leqslant n$ such that $x_i$ appears in $P$, the \emph{initial} of $P$ with respect to $x_i$ is the
        leading coefficient of $P$ considered as a univariate polynomial in $x_i^{(\ord_{\!x_i} P)}$. 
        We denote it by $\init_{\!x_i} P$.
        
        \item The \emph{separant} of $P$ with respect to $x_i$ is
        $$
        \sep_{\!x_i} P := \dfrac{\partial P}{\partial x_{i}^{(\ord_{\!x_i} P)}} .
        $$
    \end{enumerate}
    \end{definition}

    \begin{notation}\label{not:Ifg}
          Consider an ODE system (in the \emph{state-space form}): 
          \begin{equation}\label{eq:ode_main}
          \bx' = \bg(\bmu, \bx), \quad y = f(\bmu, \bx),
          \end{equation}
          where $\bx = [x_1, \ldots, x_n]^T$, $\bmu = [\mu_1, \ldots, \mu_r]^T$, $g_1, \ldots, g_n, f \in \mathbb{K}[ \bmu, \bx]$.
          These differential equations can be viewed as differential polynomials in the differential polynomial ring $\mathbb{K}[\bmu, \bx^{(\infty)}, y^{(\infty)}]$ over a differential ring $\mathbb{K}[\bmu]$ (equipped with derivation via $\bmu' = 0$).
          The differential ideal $(\bx' - \bg, y - f)^{(\infty)} \in \mathbb{K}[\bmu, \bx^{(\infty)}, y^{(\infty)}]$ describing the solutions of this system will be denoted by $I_{\bg, f}$.
    \end{notation}

    By \cite{Moog1990} (see also \cite[Lemma 3.2]{hong2020global}), the ideal $I_{\bg, f}$ is prime.
    The elimination problem we study in this paper is, for a system \eqref{eq:ode_main}, to eliminate all the $\bx$-variables. 
    In other words, we want to describe a differential ideal
	\begin{equation} \label{id}
		I = I_{\bg, f} \cap \mathbb{K}[\bmu, y^{(\infty)}].
	\end{equation}
    Since $I_{\bg, f}$ is prime, the elimination ideal is prime as well.

    \begin{definition}[Minimal polynomial]\label{not:minpoly}
        The \emph{minimal polynomial} $f_{\min}$ of the prime ideal \eqref{id} is a polynomial in \eqref{id} of the minimal order and then the minimal total degree.
        It is unique up to a constant factor \cite[Proposition 1.27]{pogudin2023differential}.
    \end{definition}

    \begin{proposition}[{{\cite[Proposition 1.15]{pogudin2023differential}}}]\label{prop::idealstructure} 
        The prime ideal \eqref{id} is uniquely determined by its minimal polynomial $f_{\min}$. 
        More precisely:
        \begin{equation*} \label{diff_ideal}
            I = \ideal{f_{\min}}^{(\infty)} : (\sep_{\!y}(f_{\min}) \,\init_{\!y}(f_{\min}))^{\infty}.
        \end{equation*}
    \end{proposition}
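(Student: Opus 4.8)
The plan is to deduce the statement from the Ritt--Kolchin theory of characteristic sets, specialized to a single differential indeterminate $y$ over the field of constants obtained by adjoining the parameters to $\mathbb{K}$. Set $F := \mathbb{K}(\bmu)$, regarded as a differential field with $\bmu' = 0$, and work in $F\{y\} := F[y^{(\infty)}]$ with the orderly ranking $y < y' < y'' < \cdots$; for this ranking the leader of a differential polynomial $P$ is $y^{(\ord_{\!y} P)}$, and $\init_{\!y}$, $\sep_{\!y}$ are its initial and separant. Since $y = f(\bmu, \bx)$ is differentially algebraic over $F$ (the state has finite dimension), the ideal $I$ is nonzero and $f_{\min}$ is well defined, of some order $d := \ord_{\!y} f_{\min} \geqslant 0$. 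I would establish the claimed identity first over $F$ and then descend to $\mathbb{K}[\bmu, y^{(\infty)}]$.

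First I would record that $f_{\min}$ is irreducible in $F[\bmu, y, \dots, y^{(d)}]$: a proper factorization would, by primeness of $I$, place a factor of strictly smaller degree in $I$, contradicting minimality of $f_{\min}$. The initial $\init_{\!y}(f_{\min})$ has order $< d$, so it cannot lie in $I$ by order-minimality. Since $\mathbb{K}$ has characteristic zero, the irreducible $f_{\min}$ is separable as a polynomial in its leader $y^{(d)}$, so $\mathrm{Res}_{y^{(d)}}(f_{\min}, \sep_{\!y}(f_{\min})) \neq 0$; were $\sep_{\!y}(f_{\min}) \in I$, this resultant would be a nonzero element of $I$ of order $< d$, again impossible. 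Thus $H := \init_{\!y}(f_{\min})\,\sep_{\!y}(f_{\min}) \notin I$ because $I$ is prime. The inclusion $\ideal{f_{\min}}^{(\infty)} : H^{\infty} \subseteq I$ is then immediate: $f_{\min} \in I$ forces $\ideal{f_{\min}}^{(\infty)} \subseteq I$ since $I$ is differential, and $H^{N} p \in I$ with $H \notin I$ gives $p \in I$ by primeness.

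For the reverse inclusion I would take any $p \in I$ and apply Ritt reduction by $f_{\min}$, producing an integer $N$ and a reduced remainder $R$ with $H^{N} p \equiv R \pmod{\ideal{f_{\min}}^{(\infty)}}$; then $R \in I$. The heart of the argument is to show $R = 0$. If $\ord_{\!y} R < d$, then $R$ is an element of $I$ of order below $d$, hence $R = 0$ by order-minimality of $f_{\min}$. If $\ord_{\!y} R = d$ but $\deg_{y^{(d)}} R < \deg_{y^{(d)}} f_{\min}$, I would form $\mathrm{Res}_{y^{(d)}}(f_{\min}, R)$, which lies in the ideal $(f_{\min}, R) \subseteq I$ and has order $< d$, hence vanishes; irreducibility of $f_{\min}$ together with the strict degree inequality then forces $R$ to be free of $y^{(d)}$, reducing to the previous case, so $R = 0$. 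Consequently $H^{N} p \in \ideal{f_{\min}}^{(\infty)}$, i.e. $p \in \ideal{f_{\min}}^{(\infty)} : H^{\infty}$, which gives $I = \ideal{f_{\min}}^{(\infty)} : H^{\infty}$ over $F$.

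The main obstacle I expect is precisely the vanishing of the reduced remainder in the equal-order subcase: this is where the definition of $f_{\min}$ through minimal \emph{total} degree must be reconciled with the minimal \emph{leader}-degree condition that characteristic-set theory requires, and it is exactly what the resultant-plus-irreducibility step is designed to settle (the delicate point being that the resultant vanishes only through a genuine common factor, since $\init_{\!y}(f_{\min})$ is a nonzero polynomial). The same irreducibility underlies the (automatically coherent, one-variable) statement that $\ideal{f_{\min}}^{(\infty)} : H^{\infty}$ is itself prime, as in Ritt's theorem. A secondary, more bookkeeping obstacle is the descent from $F\{y\}$ back to $\mathbb{K}[\bmu, y^{(\infty)}]$: one must check that clearing parameter denominators and tracking contents in $\bmu$ throughout the reduction identities does not enlarge the saturation, so that the equality established over $F$ contracts to the stated equality over the ring $\mathbb{K}[\bmu]$.
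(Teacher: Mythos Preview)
The paper does not prove this proposition; it is quoted without proof from \cite[Proposition~1.15]{pogudin2023differential}. Your outline is the standard Ritt--Kolchin characteristic-set argument for a single differential indeterminate and is correct in substance.

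One comment: the passage through $F = \mathbb{K}(\bmu)$ is an unnecessary detour that creates the very descent obstacle you flag at the end. Every step you carry out over $F$---the pseudo-reduction $H^{N}p \equiv R \pmod{\ideal{f_{\min}}^{(\infty)}}$, the irreducibility of $f_{\min}$, and the resultant argument forcing $R = 0$---works verbatim with $\mathbb{K}[\bmu, y, \ldots, y^{(d)}]$ in place of $F[y, \ldots, y^{(d)}]$, since pseudo-division and resultants require only a UFD of coefficients and since $\init_{\!y}(f_{\min})$, $\sep_{\!y}(f_{\min})$ already lie in $\mathbb{K}[\bmu, y^{(\infty)}]$. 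Done this way there is nothing to descend. If instead you insist on establishing the equality first over $F$ and then contracting, clearing denominators in $H^{N}p = \sum a_i f_{\min}^{(i)}$ introduces a factor $c \in \mathbb{K}[\bmu]\setminus\{0\}$ that is not a power of $H$, and you would then need $\ideal{f_{\min}}^{(\infty)} : c^{\infty} = \ideal{f_{\min}}^{(\infty)}$ in $\mathbb{K}[\bmu, y^{(\infty)}]$; this is a genuine (and entirely avoidable) extra step, not just bookkeeping. A minor slip: ``irreducible in $F[\bmu, y, \ldots, y^{(d)}]$'' should read $\mathbb{K}[\bmu, y, \ldots, y^{(d)}]$ or $F[y, \ldots, y^{(d)}]$, since $\bmu \subset F$.
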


\begin{example}
For a simple example of such representation consider the elimination problem for the following model with parameters $\mu_1$ and $\mu_2$:
\[ x'_1 = \mu_1 x_2, \quad x'_2 = \mu_2 x_1, \quad y = x_1 + x_2. \]
$f_{\min}$ can be obtained via double differentiation: 
\[ (y - (x_1 + x_2))'' = (y' - \mu_1 x_2 - \mu_2 x_1)' = y'' - \mu_1 \mu_2 (x_1 + x_2) = y'' - \mu_1 \mu_2 y.\] Thus, $f_{\min} = y'' - \mu_1 \mu_2 y$.
    
\end{example}

\begin{example}[Closure properties of D-algebraic functions]\label{ex:dalg_functions}
    Consider the tangent $\tan (t)$ and hyperbolic tangent $\tanh(t)$ functions. 
    They both are known to satisfy first-order differential equations $\tan'(t) = 1 + \tan^2(t)$ and $\tanh'(t) = 1 - \tanh^2(t)$ (in other words, they are \emph{D-algebraic}).
    We can now, following \cite{AitElManssour2025,TeguiaTabuguia2025}, find a differential equation satisfied, for example, by their product $\tan(t) \tanh(t)$ by considering the elimination problem for the following model:
    \[
    x_1' = 1 + x_1^2, \quad x_2' = 1 - x_2^2, \quad y = x_1 x_2.
    \]
    The triple $(x_1, x_2, y) = (\tan(t), \tanh(t), \tan(t) \tanh(t))$ is a solution of this system.
    A computation using any differential elimination algorithm (in particular, the one we describe in Section \ref{sec:algorithm}) shows that the minimal differential equation satisfied by the $y$-component of every solution (and, thus, vanishing at $\tan(t) \tanh(t)$) is:
    \[
    y^6 - y^4 y'' + y^4 + 2 y^3 (y')^2 + \frac{1}{4}(y^2 - 1) (y'')^2 - y^2 - y (y')^2 y'' - 2 y (y')^2 + (y')^4 + y'' - 1 = 0.
    \]
\end{example}


\section{Main Result}
\label{sec:main}

\begin{theorem} \label{th::bound-param}
  Let $f, g_1, \ldots, g_n$ be polynomials in $\mathbb{K}[\mu_1, \ldots, \mu_r, x_1, \ldots, x_n] = \mathbb{K}[\bmu, \bx]$.
   Denote
   \[
   d_\alpha := \deg_{\alpha} f\quad \text{ and }\quad D_\alpha := \max_{1 \leqslant i \leqslant n} \deg_{\alpha} g_i \quad\text{ for } \alpha = \bmu \text{ or } \alpha = \bx,
   \]
   and assume that $d_{\bx}, D_{\bx} > 0$.
   Let $f_{\min}$ be the minimal polynomial of $I_{\bg, f} \cap \mathbb{K}[\bmu, y^{(\infty)}]$ (see Notation \ref{not:Ifg} and \ref{not:minpoly}).
   
  Consider a positive integer $\nu$ such that $\ord f_{\min} \leqslant \nu$ ($\nu = n$ can always be used).
  Then for every monomial $\left(\prod\limits_{i=1}^{r} \mu_i^{\ell_i}\right) y^{e_0} (y')^{e_1} \ldots (y^{(\nu)})^{e_{\nu}}$ in $f_{\min}$
  the following inequalities hold:
  \begin{align} 
    \label{eq::bound_multi} & \sum\limits_{i=0}^{r} \ell_i + \sum\limits_{i=0}^{\nu}(d_{\bmu} + i D_{\bmu}) e_i \leqslant \sum\limits_{i=0}^{\nu}(d_{\bmu} + i D_{\bmu}) 
      \prod\limits_{j = 0, j \neq i}^{\nu}(d_{\bx} + j(D_{\bx}-1)), \\
    \label{eq::bound_bezout_global}& \sum\limits_{i=0}^{r} \ell_i + \sum\limits_{i=0}^{\nu}(d_{\bx} + i(D_{\bx}-1) ) e_i \leqslant
      \prod_{i=0}^{\nu}( d_{\bx} + d_{\bmu} + i(D_{\bx} + D_{\bmu} - 1)),\\
    \label{eq::bezout_y_only}&  \sum\limits_{i=0}^{\nu}(d_{\bx} + i(D_{\bx}-1) ) e_i \leqslant
      \prod\limits_{i = 0}^{\nu}(d_{\bx} + i(D_{\bx} - 1)).
  \end{align}

\end{theorem}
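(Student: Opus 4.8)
The plan is to bound the support of $f_{\min}$ by constructing an explicit variety/parametrization and applying a B\'ezout-type count to the system that cuts out $f_{\min}$. The key observation is that $y, y', \ldots, y^{(\nu)}$ can be expressed as polynomials in $\bx$ (and $\bmu$) via repeated differentiation along the flow $\bx' = \bg$: setting $h_0 := f$ and $h_{j+1} := \sum_{i=1}^n \frac{\partial h_j}{\partial x_i} g_i$, one has $y^{(j)} = h_j(\bmu, \bx)$ on the solution variety. I would first track how the multidegrees grow under this Lie-derivative recursion. In the $\bx$-variables, $\deg_{\bx} h_j \le d_x + j(D_x - 1)$, since differentiating w.r.t. $x_i$ drops degree by one and multiplying by $g_i$ adds $D_x$; in the $\bmu$-variables, $\deg_{\bmu} h_j \le d_\mu + j D_\mu$. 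These two estimates are exactly the two weight functions appearing on the left-hand sides of the three inequalities, so establishing them carefully is the essential first step.

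Next I would realize $f_{\min}$ as (a factor of) the eliminant obtained by projecting the graph of the map $\Phi\colon (\bmu, \bx) \mapsto (\bmu, h_0, \ldots, h_\nu)$ onto the $(\bmu, y, \ldots, y^{(\nu)})$-space. Concretely, $f_{\min}(\bmu, y^{(0)}, \ldots, y^{(\nu)})$ vanishes on the Zariski closure of the image of $\Phi$, and since $\ord f_{\min} \le \nu$ and $I_{\bg,f}$ is prime, $f_{\min}$ generates the defining ideal of this image hypersurface (up to the known saturation). To bound a given monomial of $f_{\min}$, the idea is to intersect the hypersurface $\{f_{\min} = 0\}$ with a generic line in the space of the variables carrying positive weight and count intersection points via a weighted/multihomogeneous B\'ezout bound, matching the weights to the degree growth above. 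Inequality~\eqref{eq::bezout_y_only} should come from the pure $\bx$-weighting ignoring parameters; \eqref{eq::bound_multi} from weighting by the $\bmu$-growth while using the $\bx$-growth to bound the number of solutions; and \eqref{eq::bound_bezout_global} from the combined weight $d_x + d_\mu + i(D_x + D_\mu - 1)$ applied homogeneously to all variables at once.

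The technical engine is the multihomogeneous B\'ezout bound, as the authors flag. I would set up a bihomogenization of the system $y^{(j)} = h_j(\bmu, \bx)$ with respect to two groups of variables — the $\bx$'s and the $\bmu$'s — where the bidegree of the $j$-th equation is governed by $(d_x + j(D_x-1),\ d_\mu + j D_\mu)$. The mixed product of these bidegrees over $j = 0, \ldots, \nu$ gives the total count of solutions, and reading off the appropriate coefficient (i.e. the appropriate symmetric combination of the per-equation degrees) produces the products on the right-hand sides: for~\eqref{eq::bound_multi} the sum $\sum_i (d_\mu + iD_\mu)\prod_{j\ne i}(d_x + j(D_x-1))$ is precisely the mixed-volume contribution pairing one factor of $\bmu$-degree with $\nu$ factors of $\bx$-degree, while the full products in \eqref{eq::bound_bezout_global}--\eqref{eq::bezout_y_only} are the top-dimensional B\'ezout numbers. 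The main obstacle I anticipate is the bookkeeping needed to pass from a bound on the number of intersection points to a bound on individual monomial exponents $(\ell_i, e_i)$ of $f_{\min}$: one must choose the generic linear slice and the weighting so that the weighted degree of the monomial in question is dominated by the weighted B\'ezout count, and then argue this uniformly over all monomials appearing in $f_{\min}$. Handling the saturation by $\sep_y(f_{\min})\init_y(f_{\min})$ from Proposition~\ref{prop::idealstructure}, and ensuring that passing to a factor of the eliminant cannot increase the support weight, are the places where the primeness of $I_{\bg,f}$ and the refinements over~\cite{mukhina2025projecting} will have to be invoked most carefully.
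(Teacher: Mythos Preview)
Your high-level strategy is exactly the one the paper follows: express $y^{(j)}$ as the Lie derivatives $h_j = \mathcal{L}_{\bg}^j(f)$, track the bidegrees $\deg_{\bx} h_j \leqslant d_x + j(D_x-1)$ and $\deg_{\bmu} h_j \leqslant d_\mu + jD_\mu$, and then bound the eliminant of the system $\{y^{(j)} - h_j\}_{j=0}^{\nu}$ via a (multi)homogeneous B\'ezout count. You have also correctly identified the structure of the right-hand sides, in particular that the sum in~\eqref{eq::bound_multi} is the mixed term pairing one $\bmu$-degree factor with $\nu$ $\bx$-degree factors.

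The genuine gap is the step you yourself flag as the ``main obstacle'': how to translate an intersection-point count into a bound on the \emph{weighted} degree $\sum_i \omega_i e_i$ (with $\omega_i = d_\mu + iD_\mu$ or $\omega_i = d_x + i(D_x-1)$) of an individual monomial of $f_{\min}$. Intersecting $\{f_{\min}=0\}$ with a generic line only gives the ordinary degree, not a weighted one. The paper resolves this with a concrete device you do not mention: a ring homomorphism $\varphi$ sending $y^{(i)} \mapsto p_i(z_i)$ where $p_i \in \mathbb{K}[z_i]$ is a generic univariate polynomial of degree $\omega_i$. Under $\varphi$, the weighted degree $\sum_i \omega_i e_i$ of a monomial in $f_{\min}$ becomes the ordinary $\bz$-degree of its image in $\tilde f_{\min} := \varphi(f_{\min})$, and genericity of the $p_i$ ensures $\tilde f_{\min}$ is square-free so that $\deg \tilde f_{\min}$ equals the intersection count one can then bound by B\'ezout. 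The three inequalities come from three choices of $\omega_i$: for~\eqref{eq::bound_multi} take $\omega_i = d_\mu + iD_\mu$ and apply a bihomogeneous B\'ezout bound (Lemma~\ref{lem::degree-new}) to the groups $\bx$ versus $(\bmu,\bz)$, after first cutting with a generic affine subspace of $\mathbb{A}^n$ of codimension $n-\nu$ (Lemma~\ref{lem:dominate}) to make the dimensions match; for~\eqref{eq::bound_bezout_global} take $\omega_i = d_x + i(D_x-1)$ and use the ordinary B\'ezout bound on total degree; for~\eqref{eq::bezout_y_only} pass to the field $\mathbb{K}(\bmu)$ and reduce to the non-parametric case.

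One worry you can drop: the saturation from Proposition~\ref{prop::idealstructure} never enters. Lemma~\ref{lem::ideal-equivalence} gives directly that $(y - h_0,\ldots, y^{(\nu)} - h_\nu) \cap \mathbb{K}[\bmu, y^{(\leqslant \nu)}] = (f_{\min})$, so $f_{\min}$ is already the full eliminant, not merely a factor of it.
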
 
We give the proof of the theorem in Section \ref{sec:proof2}.

\begin{corollary} \label{cor::bound}
  Let $f, g_1, \ldots, g_n$ be polynomials in $\mathbb{K}[x_1, \ldots, x_n] = \mathbb{K}[\bx]$
  such that $d := \deg f > 0$ and $D := \max_{1 \leqslant i \leqslant n} \deg g_i > 0$.
  Let $f_{\min}$ be the minimal polynomial of $I_{\bg, f} \cap \mathbb{K}[y^{(\infty)}]$ (see Notation \ref{not:Ifg} and \ref{not:minpoly}).
  Consider a positive integer $\nu$ such that $\ord f_{\min} \leqslant \nu$ ($\nu = n$ can always be used).
  Then for every monomial $y^{e_0} (y')^{e_1} \ldots (y^{(\nu)})^{e_{\nu}}$ in $f_{\min}$ the following inequality holds:
  \begin{equation}
    \sum\limits_{k=0}^{\nu}\bigl( d + k(D-1)\bigr)e_k \leqslant \prod\limits_{k=0}^{\nu}(d + k(D-1)).
  \end{equation}
  Note that this is essentially the inequality \eqref{eq::bezout_y_only} from Theorem \ref{th::bound-param}.
\end{corollary}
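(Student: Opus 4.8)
The plan is to obtain the corollary as an immediate specialization of Theorem~\ref{th::bound-param} to the parameter-free setting, so that no new argument is really needed. First I would regard the polynomials $f, g_1, \ldots, g_n \in \mathbb{K}[\bx]$ as elements of $\mathbb{K}[\bmu, \bx]$ in which the parameter block $\bmu$ is empty, i.e. $r = 0$. Under this reading one has $d_\mu = \deg_{\bmu} f = 0$ and $D_\mu = \max_i \deg_{\bmu} g_i = 0$, while the $\bx$-degrees become $d_x = d$ and $D_x = D$. The hypotheses $d, D > 0$ of the corollary coincide with the hypotheses $d_x, D_x > 0$ required by the theorem, and the admissible range of $\nu$ is identical in the two statements, so the theorem applies verbatim.

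Next I would invoke the conclusion of Theorem~\ref{th::bound-param}, focusing solely on inequality~\eqref{eq::bezout_y_only}. The key observation is that this particular inequality is already free of the parameter exponents $\ell_i$ and of the parameter degrees $d_\mu, D_\mu$: it reads $\sum_{i=0}^{\nu}(d_x + i(D_x - 1))e_i \leqslant \prod_{i=0}^{\nu}(d_x + i(D_x - 1))$. Substituting $d_x = d$ and $D_x = D$ turns this into $\sum_{k=0}^{\nu}(d + k(D-1))e_k \leqslant \prod_{k=0}^{\nu}(d + k(D-1))$, which is precisely the assertion of the corollary. Since with $r = 0$ every monomial of $f_{\min}$ has the form $y^{e_0}(y')^{e_1}\cdots(y^{(\nu)})^{e_\nu}$ with no parameter factors, the reduction is complete.

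There is essentially no genuine obstacle here; the entire content lies in noticing that the third of the three bounds of Theorem~\ref{th::bound-param} does not involve $\bmu$ at all, so the non-parametric case is recovered simply by zeroing out the parameter data. The only points deserving a moment's care are purely bookkeeping: verifying that the degree hypotheses and the range of $\nu$ transfer correctly, and confirming that the (now empty) parameter contribution drops out of the monomial structure. Once these are checked, the corollary follows at once from~\eqref{eq::bezout_y_only}.
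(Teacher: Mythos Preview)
Your proposal is correct and matches the paper's own proof: the paper also derives the corollary as the special case of Theorem~\ref{th::bound-param} with $r = 0$ and $d_\mu = D_\mu = 0$. Your additional remark that inequality~\eqref{eq::bezout_y_only} is already parameter-free and directly specializes to the desired bound is a helpful clarification of exactly which of the three inequalities does the work.
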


\begin{proof}(Corollary \ref{cor::bound})
Corollary \ref{cor::bound} is a special case of Theorem \ref{th::bound-param}, obtained by setting $r = 0$ and $d_{\bmu} = D_{\bmu} = 0$.
\end{proof}

\section{Experimental results}\label{sec:experimental}

In this section, we report the results of computational experiments aiming at evaluating the bounds given by Theorem \ref{th::bound-param} and Corollary \ref{cor::bound}.
For each of the experiments, we fixed some values (in the notation of Theorem \ref{th::bound-param}) of $n, |\bmu|, d_{\bx}, D_{\bx}, d_{\bmu}, D_{\bmu}$ and generated random dense parametric ODE models $\bx' = \bg(\bmu, \bx), y = f(\bmu, \bx)$ with the corresponding dimensions and degrees by sampling the coefficients uniformly at random from $[-1000, 1000] \cap \mathbb{Z}$.
For each such case, we report the following quantities:
\begin{itemize}
    \item \emph{\# terms in the bound}: the number of integer points inside the bound for the Newton polytope given by Theorem \ref{th::bound-param} (this be always a finite number);
    \item \emph{\# terms in the NP of $f_{\min}$}: the number of lattice points in the Newton polytope of the actual minimal polynomial (computed using the algorithm described in Section \ref{sec:algorithm});
    \item \emph{\# terms in $f_{\min}$}: the number of monomials in the actual minimal polynomial;
    \item \emph{\%}: the ratio between the number of monomials in $f_{\min}$ and the number of monomials in the bound from Theorem \ref{th::bound-param}. 
\end{itemize}

\begin{table}[H]
  \begin{subfigure}{0.57\textwidth}
    \begin{tabular}{ |c|c|c|c|c| } 
      \hline
    \multirow{2}{*}{$[D_{\bx}, d_{\bx}]$}  & \multicolumn{3}{|c|}{\# of terms}& \multirow{2}{*}{\%} \\ \cline{2-4}
      & Corollary \ref{cor::bound} & NP of $f_{\min}$ & $f_{\min}$ & \\
    \hline\hline	
    [1,1]& 4 & 4 & 4 & 100\%\\ 
    \hline
    [2,1]& 23 & 23 & 23 & 100\%\\ 
    \hline
    [2,2] & 169 & 169 & 169 & 100\%\\ 
    \hline
    [2,3] & 815 & 815 & 815 & 100\%\\ 
    \hline
    [2,4] & 2911 & 2911 & 2911 & 100\% \\ 
    \hline
    [2,5] & 8389 & 8389 & 8389 & 100\%\\ 
    \hline
    \hline
    [3,1] & 87 &  87 &  87  & 100\%\\ 
    \hline
    [3,2] & 575 &  575 & 575  &  100\% \\ 
      \hline
    [3,3] & 2287 & 2287 & 2287 & 100\% \\
      \hline
    [3,4] & 7153 & 7153 & 7153 & 100\% \\
      \hline
    [3,5] & 18325 & 18325 & 18325 & 100\% \\
      \hline
      \hline
    [4,1] & 241 & 241 & 241 & 100\% \\
    \hline
    [4,2] & 1417 & 1417 & 1417 & 100\% \\
      \hline
  \end{tabular}
  \caption{Bound for the dimension $n = 2$} \label{tab:non-par-1}
  \end{subfigure}
  \begin{subfigure}{0.57\textwidth}
    \centering
      \begin{tabular}{ |c|c|c|c|c| } 
        \hline
      \multirow{2}{*}{$[D_{\bx}, d_{\bx}]$}  & \multicolumn{3}{|c|}{\# of terms}& \multirow{2}{*}{\%} \\ \cline{2-4}
        & Corollary \ref{cor::bound} & NP of $f_{\min}$ & $f_{\min}$ & \\
      \hline\hline	
      [1,1] & 5 & 5  & 5  &  100\% \\ 
      \hline
      [1,2] & 495 & 495  & 495  &  100\% \\ 
      \hline
      [1,3] & 31465 &  31465 & 31465  &  100\% \\ 
      \hline
      \hline
      [2,1]& 1292 & 1292 & 1292 & 100\%\\ 
      \hline
      \hline
      [3,1] & 65637 & 65637  &  65637  & 100\%\\ 
      \hline
    \end{tabular}
    \caption{Bound for the dimension $n = 3$} \label{tab:non-par-2}
    \end{subfigure}
    \caption{Numerical values of the bound in the non-parametric ($|\bmu| = 0$) case} \label{tab:non-par}
    \end{table}

The numbers are consistent over several independent runs, so are equal to the generic values with high probability.
The considered cases can be classified as follows.
\begin{enumerate}
    \item \emph{Nonparametric systems} (i.e. $|\bmu| = 0$). 
    In this case, we always have $d_{\bmu} = D_{\bmu} = 0$.
    The results for different pairs $[D_{\bx}, d_{\bx}]$ are summarized in Tables \ref{tab:non-par-1} and \ref{tab:non-par-2} for $n = 2$ and $n = 3$, respectively.

    \item \emph{Parametric systems} with $n = 2$.
    Here we consider the case $D_{\bmu} = 1$, that is, when parameters appear only linearly in dynamics (as is often the case in practice) and split it into two subcases $d_{\bmu} = 0$ and $d_{\bmu} = 1$ depending on whether the parameters appear in the output or not.
    The results for these subcases (for $|\bmu| = 1$ and $|\bmu| = 2$) are given in Tables \ref{tab:par-1} and \ref{tab:par-2}, respectively.
\end{enumerate}

    \begin{table}[H]
  \begin{subfigure}{0.57\textwidth}
    \begin{tabular}{ |c|c|c|c|c| } 
      \hline
    \multirow{2}{*}{$[D_{\bx}, d_{\bx}]$}  & \multicolumn{3}{|c|}{\# of terms}& \multirow{2}{*}{\%} \\ \cline{2-4}
      & Theorem \ref{th::bound-param} & NP of $f_{\min}$ & $f_{\min}$ & \\
    \hline\hline
    [1,1]& 13 & 9 & 9 & 69\%\\ 
    \hline
    [1,2]& 350 & 280 & 280 & 80\%\\ 
    \hline
    [1,3] & 4675& 4015& 4015& 86\%\\ 
    \hline
    \hline
    [2,1] & 152 & 129 & 129 & 85\%\\ 
    \hline
    [2,2] & 2772 & 2434 & 2434 & 88\%\\ 
    \hline
    \hline
    [3,1] & 848 & 761 & 761 & 90\%\\ 
    \hline
    [3,2] & 12905 & 11755 & 11755 & 91\%\\ 
    \hline
    \hline
    [4,1] & 3088 & 2847 & 2847 & 92\%\\ 
    \hline
  \end{tabular}
  \caption{$|\bmu| = 1$} 
  \end{subfigure}
  \begin{subfigure}{0.57\textwidth}
    \centering
      \begin{tabular}{ |c|c|c|c|c| } 
        \hline
      \multirow{2}{*}{$[D_{\bx}, d_{\bx}]$}  & \multicolumn{3}{|c|}{\# of terms}& \multirow{2}{*}{\%} \\ \cline{2-4}
        & Theorem \ref{th::bound-param} & NP of $f_{\min}$ & $f_{\min}$ & \\
      \hline\hline
      [1,1]& 29 & 16 & 16 & 55\%\\ 
        \hline
      [1,2]& 2002 & 1337 & 1337 & 67\%\\ 
        \hline
      [1,3]& 53779 & 40414 & 40414 & 75\%\\ 
        \hline
        \hline
      [2,1]& 594 & 442 & 442 & 74\%\\ 
        \hline
      [2,2]& 24769 & 19394 & 19394 & 78\%\\ 
        \hline
        \hline
      [3,1] & 4665 & 3817 & 3817 & 82\%\\ 
        \hline 
        \hline 
      [4,1] & 21816 & 18728 & 18728 & 86\%\\ 
        \hline 
    \end{tabular}
    \caption{$|\bmu| = 2$} 
    \end{subfigure}
    \caption{Numerical values of the bound for the dimension $n = 2$, $d_{\bmu} = 0, D_{\bmu} = 1$} \label{tab:par-1}
    \end{table} 

    \begin{table}[H]
  \begin{subfigure}{0.57\textwidth}
    \begin{tabular}{ |c|c|c|c|c| } 
      \hline
    \multirow{2}{*}{$[D_{\bx}, d_{\bx}]$}  & \multicolumn{3}{|c|}{\# of terms}& \multirow{2}{*}{\%} \\ \cline{2-4}
      & Theorem \ref{th::bound-param} & NP of $f_{\min}$ & $f_{\min}$ & \\
    \hline\hline
    [1,1]& 22 & 10 & 20 & 45\%\\ 
    \hline
    [1,2]& 665 & 525 & 525 & 79\%\\ 
    \hline
    [1,3]& 9130 & 8030 & 8030 & 88\%\\ 
    \hline
    \hline
    [2,1]& 340 & 248 & 248 & 73\%\\ 
    \hline
    [2,2]& 6088 & 5243 & 5243 & 86\%\\ 
    \hline
    \hline
    [3,1]& 2318 & 1883 & 1883 & 81\%\\ 
    \hline
    [3,2]& 31825 & 28375 & 28375 &  89\%\\ 
    \hline
    \hline
    [4,1]& 9973 & 8527 & 8527 & 86\%\\ 
    \hline
  \end{tabular}
  \caption{$|\bmu| = 1$} 
  \end{subfigure}
  \begin{subfigure}{0.57\textwidth}
    \centering
      \begin{tabular}{ |c|c|c|c|c| } 
        \hline
      \multirow{2}{*}{$[D_{\bx}, d_{\bx}]$}  & \multicolumn{3}{|c|}{\# of terms}& \multirow{2}{*}{\%} \\ \cline{2-4}
        & Theorem \ref{th::bound-param} & NP of $f_{\min}$ & $f_{\min}$ & \\
      \hline\hline	
      [1,1]& 74 & 20 & 20 & 27\%\\
        \hline
      [1,2]& 6790 & 4340 & 4340 & 64\%\\
    \hline
    \hline
    [2,1]& 2717 & 1495 & 1495 & 55\%\\
    \hline
    \hline
    [3,1]& 32465 & 21745 & 21745 & 67\%\\ 
    \hline
    \end{tabular}
    \caption{$|\bmu| = 2$} 
    \end{subfigure}
    \caption{Numerical values of the bound for the dimension $n = 2$, $d_{\bmu} = D_{\bmu} = 1$} \label{tab:par-2}
    \end{table} 

The reported numerical data allows us to make the following observations.

\begin{itemize}
    \item The Newton polytope given by Corollary \ref{cor::bound} (that is, in the non-parametric case) \emph{coincides} with the actual Newton polytope.
    We believe that this can be proved using the methods developed to establish the tightness of more specialized bounds in our previous paper \cite{mukhina2025projecting}.

    \item In all the experiments, the minimal polynomial \emph{is dense with respect to its Newton polytope} (curiously, this is not always the case for the slightly different shape of the system in \cite[Table 1]{mukhina2025projecting}).
    We conjecture that this is always the case.
    This indicates that Newton polytope is an adequate tool to estimate the support of minimal polynomials.

    \item The accuracy of our bound (given in the \% column) \emph{increases towards $100\%$} when any of $d_{\bx}$ or $D_{\bx}$ increases.
    We conjecture that the bound accuracy reaches $100\%$ in any of the limits $d_{\bx} \to \infty$ or $D_{\bx} \to \infty$.
\end{itemize}


\section{Reduction to polynomial elimination}\label{sec:proof1} 
In this section we will explain how we reduce the differential elimination problem 
to a polynomial elimination problem using the approach similar to \cite{mukhina2025projecting}. 
We will fix some notation used throughout the rest of the paper.

For vectors $\balpha = [\alpha_1, \ldots, \alpha_r]^T \in \mathbb{Z}_{\geqslant 0}^r, \bbeta = [\beta_1, \ldots, \beta_n]^T \in \mathbb{Z}_{\geqslant 0}^n$ we denote
\[ 
    \bmu^{\balpha} \bx^{\bbeta} := \prod\limits_{i = 1}^r \mu_i^{\alpha_i} \prod_{j = 1}^n x_j^{\beta_j}, \quad |\balpha| := \sum_{i=1}^r \alpha_i \text{ and } |\bbeta| := \sum_{i=1}^n \beta_i.
\]

In the present paper, we consider an ODE system with parameters (see Notation \ref{not:Ifg})
\begin{equation} \label{eq::ODE-1}
  \bx' = \bg(\bmu, \bx) \text{ with } \bx = [x_1, \ldots, x_n]^T, \bmu = [\mu_1, \ldots, \mu_r]^T, g_1, \ldots, g_n \in \mathbb{K}[
  \bmu, \bx],
\end{equation}
and the problem of finding a differential equation for 
\begin{equation} \label{eq::ODE-2}
  y = f(\bmu, \bx),\; \; f \in \mathbb{K}[
  \bmu, \bx].
\end{equation}

\begin{notation} \label{not::LieOperator}
  Consider a polynomial vector field $ \bx' = \bg(\bmu, \bx)$ with  $\bx = [x_1, \ldots, x_n]^T$, $\bmu = [\mu_1, \ldots, \mu_r]^T$, $g_1, \ldots, g_n \in \mathbb{K}[\bmu, \bx]$.
  \begin{itemize}
     \item  For a polynomial $q := \sum\limits_{\balpha, \bbeta} k_{\balpha, \bbeta}  \bmu^{\balpha} \bx^{\bbeta} \in \mathbb{K}[\bmu, \bx]$ with $k_{\balpha, \bbeta} \in \mathbb{K}$, we denote $q^{\partial} := \sum\limits_{\balpha, \bbeta} k_{\balpha, \bbeta}'  \bmu^{\balpha} \bx^{\bbeta}$.
    \item We denote the Lie derivative operator $\mathcal{L}_{\bg} \colon \mathbb{K}[\bmu, \bx] \mapsto \mathbb{K}[\bmu, \bx]$
  by $\mathcal{L}_{\bg}(q) := \sum\limits_{i=1}^n g_i \dfrac{\partial q}{\partial x_i} + q^{\partial}.$
  \end{itemize}
\end{notation}

\begin{lemma}[{{cf. \cite[Lemma 1]{mukhina2025projecting}}}] \label{lem::ideal-equivalence}
  For the system \eqref{eq::ODE-1}-\eqref{eq::ODE-2} for every $s \geqslant 0$:
  \[ 
  (y - f, y' - \mathcal{L}_{\bg}(f), \ldots, y^{(s)} - \mathcal{L}^s_{\bg}(f)) = 
  I_{\bg, f} \cap \mathbb{K}[\bmu, \bx, y^{(\leqslant s)}].
  \]
\end{lemma}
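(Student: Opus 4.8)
The plan is to introduce an auxiliary evaluation homomorphism that ``solves'' the system along the vector field, after which the two inclusions become, respectively, a short induction and a one-line consequence of substitution. First I would note that $\mathcal{L}_{\bg} = \sum_i g_i \,\partial_{x_i} + (\cdot)^{\partial}$ makes $R := \mathbb{K}[\bmu, \bx]$ into a differential ring whose derivation restricts on $\mathbb{K}[\bmu]$ to $(\cdot)^{\partial}$, i.e.\ to the derivation of $\mathbb{K}[\bmu]$ with $\bmu' = 0$. By the universal property of the differential polynomial ring there is then a unique differential $\mathbb{K}[\bmu]$-algebra homomorphism $\phi \colon \mathbb{K}[\bmu, \bx^{(\infty)}, y^{(\infty)}] \to (R, \mathcal{L}_{\bg})$ with $\phi(x_i) = x_i$ and $\phi(y) = f$; compatibility with the derivations forces $\phi(x_i^{(j)}) = \mathcal{L}_{\bg}^{j}(x_i)$ and $\phi(y^{(j)}) = \mathcal{L}_{\bg}^{j}(f)$, where $\mathcal{L}_{\bg}(x_i) = g_i$. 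Since $\phi(x_i' - g_i) = 0$, $\phi(y - f) = 0$, and $\ker\phi$ is a differential ideal, this yields $I_{\bg, f} \subseteq \ker\phi$.

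Write $J_s$ for the left-hand side ideal, generated in $\mathbb{K}[\bmu, \bx, y^{(\leqslant s)}]$ by $y^{(k)} - \mathcal{L}_{\bg}^k(f)$ for $0 \leqslant k \leqslant s$. For the inclusion $J_s \subseteq I_{\bg, f}$ it suffices to place each generator in $I_{\bg, f}$. As $y - f \in I_{\bg, f}$ and $I_{\bg, f}$ is differential, $y^{(k)} - f^{(k)} = (y - f)^{(k)} \in I_{\bg, f}$, so I only need the congruence $f^{(k)} \equiv \mathcal{L}_{\bg}^k(f) \pmod{I_{\bg, f}}$. I would prove this by induction on $k$: differentiating the case $k$ and reducing the resulting $\bx'$ via $x_i' \equiv g_i \pmod{I_{\bg, f}}$ turns $\bigl(\mathcal{L}_{\bg}^k f\bigr)'$ into $\bigl(\mathcal{L}_{\bg}^k f\bigr)^{\partial} + \sum_i g_i\, \partial_{x_i} \mathcal{L}_{\bg}^k f = \mathcal{L}_{\bg}^{k+1}(f)$, which is exactly the case $k+1$. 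The generators visibly lie in $\mathbb{K}[\bmu, \bx, y^{(\leqslant s)}]$, giving $J_s \subseteq I_{\bg, f} \cap \mathbb{K}[\bmu, \bx, y^{(\leqslant s)}]$.

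For the reverse inclusion, take $p \in I_{\bg, f} \cap \mathbb{K}[\bmu, \bx, y^{(\leqslant s)}]$ and view it as a polynomial in $y, \ldots, y^{(s)}$ over $R$. The elementary telescoping identity shows that the substitution $y^{(k)} \mapsto \mathcal{L}_{\bg}^k(f)$ changes $p$ by an element of $J_s$, and the substituted polynomial is precisely $\phi(p) \in R$; hence $p - \phi(p) \in J_s$. Since $p \in I_{\bg, f} \subseteq \ker\phi$, we have $\phi(p) = 0$, so $p \in J_s$. Combining the two inclusions proves the lemma.

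As for the main obstacle, the conceptual weight sits entirely in the well-definedness of $\phi$: one must verify that $\mathcal{L}_{\bg}$ is genuinely a derivation agreeing on $\mathbb{K}[\bmu]$ with the ambient derivation, which is exactly where the coefficient derivation $(\cdot)^{\partial}$ and the convention $\bmu' = 0$ must be reconciled before the universal property applies. Once $\phi$ is available, the reverse inclusion --- the genuinely nontrivial elimination statement --- collapses to $\phi(p) = 0$. A $\phi$-free alternative would instead demand an explicit rewriting argument eliminating every $\bx$-derivative through the triangular relations $x_i^{(j+1)} \equiv \mathcal{L}_{\bg}^{j}(g_i) \pmod{I_{\bg, f}}$, which is correct but considerably more laborious to make rigorous.
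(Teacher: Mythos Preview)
Your proof is correct. The forward inclusion $J_s \subseteq I_{\bg, f} \cap \mathbb{K}[\bmu, \bx, y^{(\leqslant s)}]$ is argued exactly as in the paper: induction on $k$, differentiate the case-$k$ generator, and reduce using $x_i' \equiv g_i \pmod{I_{\bg, f}}$.

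The reverse inclusion is where the two arguments diverge. The paper observes that the generators $y^{(k)} - \mathcal{L}_{\bg}^k(f)$ form a lex Gr\"obner basis of $J_s$ (their leading terms are the pairwise distinct variables $y^{(k)}$, so Buchberger's first criterion applies), reduces an arbitrary $p \in I_{\bg, f} \cap \mathbb{K}[\bmu, \bx, y^{(\leqslant s)}]$ to a nonzero remainder in $\mathbb{K}[\bmu, \bx]$, and derives a contradiction from the external fact $I_{\bg, f} \cap \mathbb{K}[\bmu, \bx] = \{0\}$ cited from~\cite{hong2020global}. Your evaluation homomorphism $\phi$ packages the same computation---Gr\"obner reduction by those generators \emph{is} the substitution $y^{(k)} \mapsto \mathcal{L}_{\bg}^k(f)$, and the paper's remainder is exactly your $\phi(p)$---but replaces the external citation by the direct inclusion $I_{\bg, f} \subseteq \ker\phi$. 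Since $\phi$ restricts to the identity on $\mathbb{K}[\bmu, \bx]$, this incidentally reproves $I_{\bg, f} \cap \mathbb{K}[\bmu, \bx] = \{0\}$ rather than quoting it. Your route is therefore a bit more self-contained, at the price of verifying that $(R, \mathcal{L}_{\bg})$ is a differential $\mathbb{K}[\bmu]$-algebra so that the universal property applies; the paper's route is shorter once the cited lemma is granted.
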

\begin{proof}
  We denote 
  \[ J :=  (y - f, y' - \mathcal{L}_{\bg}(f), \ldots, y^{(s)} - \mathcal{L}^s_{\bg}(f)). \]
  and 
  \[ I := I_{\bg, f} \cap \mathbb{K}[\bmu, \bx, y^{(\leqslant s)}]. \]
  First, we prove that for every $0 \leqslant k \leqslant s$, we have $y^{(k)} - \mathcal{L}_{\bg}^{(k)} \in I$.
  We show this via induction on $k$. For the base case $k = 0$, we have $y - f \in I$. By the induction hypothesis
  for some $0 \leqslant k \leqslant s-1$ we have $p := y^{(k)} - \mathcal{L}^{k}_{\bg}(f) \in I$.
  We note that $p' \in I$ and $p' = y^{(k+1)} - \sum_{i=1}^n x_i' \frac{\partial}{\partial x_i} \mathcal{L}_{\bg}^k(f) - (\mathcal{L}_{\bg}^k(f))^{\partial}$. 
  Since $x'_i \equiv g_i \pmod I$, we get $p' = y^{(k+1)} - \mathcal{L}_{\bg}^{k+1}(f) \in I$.
  Hence, all generators of the ideal $J$ belong to $I$, so $J \subset I$.
  
  For the reverse inclusion, we proceed by contradiction. 
  Let $p$ be a polynomial in the ideal $I$ such that $p \notin J$.
  We fix the monomial ordering on $\mathbb{K}[\bmu, \bx, y^{(\leqslant s)}]$ to be the lexicographic monomial
  ordering with
  \[ 
  y^{(s)} > y^{(s-1)} > \ldots > y > x_1 > x_2 > \ldots > x_n > \mu_1 \ldots > \mu_r.
  \]
  The leading term of $y^{(i)} - \mathcal{L}_{\bg}^i(f)$ is $y^{(i)}$, so the leading terms of all generators of $J$ are distinct variables.
  Hence this set is a Gr\"obner basis of $J$ by the first Buchberger criterion \cite{buchberger1979criterion}. 
  The result of the reduction of $p$ with respect to 
  the Gr\"obner basis belongs to $\mathbb{K}[\bmu, \bx]$ and is distinct from zero. 
  Thus, we get a contradiction 
  with $p \in I$ because $I \cap \mathbb{K}[\bmu, \bx] = {0}$ by \cite[Lemmas 3.1 and 3.2]{hong2020global}.
\end{proof}

\begin{corollary}
 The minimal polynomial $f_{\min}$ in $I_{\bg, f} \cap \mathbb{K}[\bmu, y^{(\infty)}]$ with $s := \ord f_{\min}$ is the generator of the
   principal ideal $(y - f, y' - \mathcal{L}_{\bg}(f), \ldots, y^{(s)} - \mathcal{L}_{\bg}^s(f)) \cap \mathbb{K}[\bmu, y^{(\leqslant s)}]$.
\end{corollary}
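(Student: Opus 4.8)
The plan is to reduce the differential statement to a purely polynomial one and then identify the generator of a height‑one prime in a polynomial ring. Set $s := \ord f_{\min}$, abbreviate $J := (y - f, y' - \mathcal{L}_{\bg}(f), \ldots, y^{(s)} - \mathcal{L}_{\bg}^s(f))$, and write $R_s := \mathbb{K}[\bmu, y^{(\leqslant s)}]$. Since every generator of $J$ lies in $\mathbb{K}[\bmu, \bx, y^{(\leqslant s)}]$, any element of $J$ that is free of $\bx$ has order at most $s$ in $y$; hence $J \cap \mathbb{K}[\bmu, y^{(\infty)}] = J \cap R_s$. Applying Lemma~\ref{lem::ideal-equivalence} and then intersecting with $R_s$ gives
\[
J \cap R_s = \bigl(I_{\bg, f} \cap \mathbb{K}[\bmu, \bx, y^{(\leqslant s)}]\bigr) \cap R_s = I_{\bg, f} \cap R_s = I \cap R_s =: P,
\]
where $I = I_{\bg, f} \cap \mathbb{K}[\bmu, y^{(\infty)}]$ is the elimination ideal from~\eqref{id}. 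Thus it suffices to prove that $P$ is the principal ideal generated by $f_{\min}$.

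Next I would record that $P$ is a prime ideal of $R_s$, being the contraction of the prime ideal $I$, and that $f_{\min} \in P$. The heart of the argument is to show that $P$ has height one in $R_s$, which is a polynomial ring in $r + s + 1$ variables. For this I would compute $\dim R_s / P = \operatorname{trdeg}_{\mathbb{K}} \operatorname{Frac}(R_s/P)$ by exhibiting a transcendence basis. The parameters $\mu_1, \ldots, \mu_r$ are algebraically independent over $\mathbb{K}$ modulo $P$: any algebraic relation among them would clear denominators to a nonzero element of $I_{\bg, f} \cap \mathbb{K}[\bmu, \bx]$, which vanishes by~\cite[Lemmas~3.1 and~3.2]{hong2020global}. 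The derivatives $y, y', \ldots, y^{(s-1)}$ are algebraically independent over $\mathbb{K}(\bmu)$ modulo $P$: a relation would yield a nonzero element of $I$ of order at most $s - 1$, contradicting the minimality of the order $s = \ord f_{\min}$. Finally, $y^{(s)}$ is algebraic over $\mathbb{K}(\bmu, y, \ldots, y^{(s-1)})$ because $f_{\min} \in P$ involves $y^{(s)}$ nontrivially. Hence $\{\mu_1, \ldots, \mu_r, y, \ldots, y^{(s-1)}\}$ is a transcendence basis, $\dim R_s / P = r + s$, and $\operatorname{codim} P = 1$.

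To conclude, I would invoke that $R_s$ is a UFD, so every height‑one prime is principal: write $P = (q)$ with $q$ irreducible. Since $f_{\min} \in P = (q)$, we have $q \mid f_{\min}$, so $\ord q \leqslant s$; as $q \in I$, minimality of the order forces $\ord q = s$, and minimality of the total degree among order‑$s$ elements of $I$ forces $\deg q = \deg f_{\min}$. Therefore $f_{\min} = c\, q$ for some nonzero $c \in \mathbb{K}$, and $P = (q) = (f_{\min})$, completing the proof. (Alternatively, one could route the last two paragraphs through Proposition~\ref{prop::idealstructure}, intersecting $\ideal{f_{\min}}^{(\infty)} : (\sep_{\!y}(f_{\min})\,\init_{\!y}(f_{\min}))^{\infty}$ with $R_s$, but the dimension argument seems cleaner.)

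I expect the main obstacle to be the height‑one computation, and within it the justification that the parameters remain algebraically independent modulo $P$; this is precisely where the hypothesis $I_{\bg, f} \cap \mathbb{K}[\bmu, \bx] = 0$ is essential, and it is the point where the presence of the symbolic parameters $\bmu$ genuinely differs from the non‑parametric setting of~\cite{mukhina2025projecting}. Everything else reduces to the minimality of $f_{\min}$ and unique factorization.
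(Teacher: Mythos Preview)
Your argument is correct. The paper itself does not supply a proof of this corollary; it is stated as an immediate consequence of Lemma~\ref{lem::ideal-equivalence}, and later, in the proof of Theorem~\ref{th::bound-param}, the equality $I_{\bg,f}\cap\mathbb{K}[\bmu,y^{(\leqslant\nu)}]=(f_{\min})$ is simply asserted, presumably relying on the external references~\cite[Propositions~1.15 and~1.27]{pogudin2023differential}. Your proof therefore fills in what the paper leaves implicit.

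The route you take---showing that $P=I\cap R_s$ is a height-one prime in the UFD $R_s$ via a transcendence-degree count, and then matching the irreducible generator to $f_{\min}$ by the two minimality conditions in Definition~\ref{not:minpoly}---is the standard self-contained way to see this, and it works exactly as you wrote it. The alternative you mention at the end (intersecting the description $I=\ideal{f_{\min}}^{(\infty)}:(\sep_y f_{\min}\cdot\init_y f_{\min})^{\infty}$ from Proposition~\ref{prop::idealstructure} with $R_s$) is closer in spirit to what the paper's citations provide, but it still needs a small argument to pass from the saturated differential ideal to the truncated polynomial ideal, so your dimension argument is indeed cleaner. One cosmetic remark: the phrase ``clear denominators'' in your treatment of the $\bmu$-independence step is unnecessary, since a polynomial relation among the $\mu_i$ modulo $P$ is already an element of $\mathbb{K}[\bmu]\cap P\subset I_{\bg,f}\cap\mathbb{K}[\bmu,\bx]=\{0\}$; the clearing of denominators is only needed in the next step, where the coefficients genuinely live in $\mathbb{K}(\bmu)$.
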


\begin{proof}
    By Lemma \ref{lem::ideal-equivalence} we have
    \[I_{\bg, f} \cap \mathbb{K}[\bmu, y^{(\leqslant s)}] = (y - f, y' - \mathcal{L}_{\bg}(f), \ldots, y^{(s)} - \mathcal{L}^s_{\bg}(f)) \cap \mathbb{K}[\bmu, y^{(\leqslant s)}].\]
    We consider a polynomial map 
    \[ \varphi  : \mathbb{A}^{n} \rightarrow \mathbb{A}^{s+1}, \]
    such that
       \[ (x_1, \ldots, x_n) \mapsto (f, \mathcal{L}_{\bg}(f), \ldots, \mathcal{L}^{(s)}_{\bg}(f)). \]
    Since $f_{\min}$ vanishes on the image of $\varphi$ and $s := \ord f_{\min}$, we have $\dim(\im(\varphi)) \leqslant s$. Moreover, since $f_{\min}$ is unique up to a constant factor \cite[Proposition 1.27]{pogudin2023differential}, we have $\dim(\im(\varphi)) = s$. 
    Therefore, the image of $\varphi$ is a hypersurface and the following holds:
    \[(y - f, y' - \mathcal{L}_{\bg}(f), \ldots, y^{(s)} - \mathcal{L}_{\bg}^s(f)) \cap \mathbb{K}[\bmu, y^{(\infty)}] = (f_{\min}).\]
    
\end{proof}

\begin{lemma} \label{lem::monom-bound-param}
  For every monomial $\bmu^{\balpha} \bx^{\bbeta}$ and every monomial $ \bmu^{\tilde{\balpha}} \bx^{\tilde{\bbeta}} $ which occurs in 
  $\mathcal{L}_{\bg}(\bmu^{\balpha}\bx^{\bbeta})$
  the following inequalities hold:
  \[ |\tilde{\balpha}| \leqslant |\balpha| + D_{\bmu} \quad \text{ and } \quad |\tilde{\bbeta}| \leqslant |\bbeta| + D_{\bx} - 1.\]
\end{lemma}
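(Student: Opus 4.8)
The plan is to analyze the action of the Lie derivative operator $\mathcal{L}_{\bg}$ on a single monomial $\bmu^{\balpha}\bx^{\bbeta}$ by unfolding its definition from Notation~\ref{not::LieOperator}, namely $\mathcal{L}_{\bg}(q) = \sum_{i=1}^n g_i \frac{\partial q}{\partial x_i} + q^{\partial}$, and to track separately the $\bmu$-degree and the $\bx$-degree of each monomial that can appear in the output. Since the bound is stated for an arbitrary monomial of the input and for an arbitrary monomial of the output, it suffices to handle the two summands of $\mathcal{L}_{\bg}$ term by term.

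First I would treat the second summand $(\bmu^{\balpha}\bx^{\bbeta})^{\partial}$. By the definition of the $\partial$ operation, it only differentiates the scalar coefficient with respect to the derivation on $\mathbb{K}[\bmu]$ and leaves the monomial $\bmu^{\balpha}\bx^{\bbeta}$ itself unchanged; hence any monomial it contributes has $\bmu$-degree exactly $|\balpha|$ and $\bx$-degree exactly $|\bbeta|$, which trivially satisfies both desired inequalities since $D_{\mu}, D_x \geqslant 0$ and $D_x \geqslant 1$. Next I would treat a typical term $g_i \frac{\partial}{\partial x_i}(\bmu^{\balpha}\bx^{\bbeta})$ of the first summand. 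Differentiating $\bx^{\bbeta}$ with respect to $x_i$ lowers its total $\bx$-degree by one (to $|\bbeta| - 1$) and does not change its $\bmu$-degree, while multiplication by $g_i$ adds, by the definitions $D_{\mu} = \max_i \deg_{\bmu} g_i$ and $D_x = \max_i \deg_{\bx} g_i$, at most $D_{\mu}$ to the $\bmu$-degree and at most $D_x$ to the $\bx$-degree of any resulting monomial. Combining these two effects, every monomial arising here has $\bmu$-degree at most $|\balpha| + D_{\mu}$ and $\bx$-degree at most $(|\bbeta| - 1) + D_x = |\bbeta| + D_x - 1$, exactly as claimed.

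Putting the two cases together, every monomial $\bmu^{\tilde{\balpha}}\bx^{\tilde{\bbeta}}$ appearing in $\mathcal{L}_{\bg}(\bmu^{\balpha}\bx^{\bbeta})$ satisfies $|\tilde{\balpha}| \leqslant |\balpha| + D_{\mu}$ and $|\tilde{\bbeta}| \leqslant |\bbeta| + D_x - 1$, which is the statement of the lemma. I do not anticipate a genuine obstacle here: the argument is a routine degree bookkeeping, and the only point requiring a little care is to confirm that differentiation by $x_i$ cannot raise the $\bx$-degree and that the $\partial$ operation on coefficients does not touch the monomial degrees at all. The hypothesis $D_x > 0$ ensures the bound $|\tilde{\bbeta}| \leqslant |\bbeta| + D_x - 1$ is meaningful (the right-hand side does not decrease below $|\bbeta|-1$), and no further assumptions are needed.
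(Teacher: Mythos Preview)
Your proof is correct and follows essentially the same approach as the paper's: both arguments reduce to the observation that each term $g_i\,\partial_{x_i}(\bmu^{\balpha}\bx^{\bbeta})$ lowers the $\bx$-degree by one via differentiation and then raises the $\bmu$- and $\bx$-degrees by at most $D_\mu$ and $D_x$ via multiplication by a monomial of $g_i$. The only cosmetic difference is that you explicitly handle the summand $(\bmu^{\balpha}\bx^{\bbeta})^{\partial}$, whereas the paper silently drops it because the coefficient of a bare monomial is $1$ and $1'=0$, so this term is in fact zero; your treatment of it is harmless but unnecessary.
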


\begin{proof}
 Since $\mathcal{L}_{\bg}(\bmu^{\balpha} \bx^{\bbeta}) = \bmu^{\balpha} \sum\limits_{i=1}^n g_i \frac{\partial}{\partial x_i} \bx^{\bbeta}$, 
 then $\bmu^{\tilde{\balpha}} \bx^{\tilde{\bbeta}} = \bmu^{\balpha} m \frac{\partial}{\partial x_i} \bx^{\bbeta}$ 
 for some $1 \leqslant i \leqslant n$ 
 and some monomial $m$ in $g_i$. So, $|\tilde{\balpha}| \leqslant |\balpha| + D_{\bmu}$ and $|\tilde{\bbeta}| \leqslant |\bbeta| + D_{\bx} - 1$.
\end{proof}

\begin{corollary} \label{cor::monom-bound-param}
 For every monomial $\bmu^{\balpha} \bx^{\bbeta}$ in $\mathcal{L}_{\bg}^{i}(f)$, the following inequalities hold:
 \[ |\balpha| \leqslant d_{\bmu} + i D_{\bmu} \quad \text{ and } \quad |\bbeta| \leqslant d_{\bx} + i(D_{\bx} - 1).\]
\end{corollary}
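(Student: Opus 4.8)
The plan is to prove the two inequalities by induction on $i$, bootstrapping Lemma~\ref{lem::monom-bound-param}, which already controls the effect of a single application of $\mathcal{L}_{\bg}$ on the $\bmu$- and $\bx$-degrees of a monomial.

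For the base case $i = 0$ we have $\mathcal{L}_{\bg}^{0}(f) = f$, so for any monomial $\bmu^{\balpha}\bx^{\bbeta}$ occurring in $f$ its total $\bmu$-degree $|\balpha|$ is at most $\deg_{\mu} f = d_{\mu}$ and its total $\bx$-degree $|\bbeta|$ is at most $\deg_x f = d_x$. These are exactly the asserted bounds specialized to $i = 0$, since $d_{\mu} + 0 \cdot D_{\mu} = d_{\mu}$ and $d_x + 0 \cdot (D_x - 1) = d_x$.

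For the inductive step I would assume the bounds hold for every monomial appearing in $\mathcal{L}_{\bg}^{i}(f)$ and take an arbitrary monomial $\bmu^{\tilde{\balpha}}\bx^{\tilde{\bbeta}}$ in $\mathcal{L}_{\bg}^{i+1}(f) = \mathcal{L}_{\bg}(\mathcal{L}_{\bg}^{i}(f))$. Because $\mathcal{L}_{\bg}$ is $\mathbb{K}$-linear, any monomial surviving in $\mathcal{L}_{\bg}^{i+1}(f)$ must occur in $\mathcal{L}_{\bg}(\bmu^{\balpha}\bx^{\bbeta})$ for at least one monomial $\bmu^{\balpha}\bx^{\bbeta}$ genuinely present in $\mathcal{L}_{\bg}^{i}(f)$. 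Applying Lemma~\ref{lem::monom-bound-param} to this pair and then the induction hypothesis yields
\[
|\tilde{\balpha}| \leqslant |\balpha| + D_{\mu} \leqslant (d_{\mu} + i D_{\mu}) + D_{\mu} = d_{\mu} + (i+1)D_{\mu},
\]
and, analogously, $|\tilde{\bbeta}| \leqslant |\bbeta| + (D_x - 1) \leqslant d_x + (i+1)(D_x - 1)$, which closes the induction.

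There is essentially no hard step here; the only point deserving care is the possible cancellation of terms when $\mathcal{L}_{\bg}^{i}(f)$ is expanded. This causes no difficulty, because the inequalities only need to be checked for monomials that genuinely survive, and by linearity every surviving monomial of $\mathcal{L}_{\bg}^{i+1}(f)$ is inherited from a surviving monomial of $\mathcal{L}_{\bg}^{i}(f)$; restricting attention to a subset of monomials can only make the bounds easier to satisfy.
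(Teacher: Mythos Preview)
Your proof is correct and takes essentially the same approach as the paper, which simply states that the corollary follows from an $i$-fold application of Lemma~\ref{lem::monom-bound-param} to $f$. One small imprecision: $\mathcal{L}_{\bg}$ is additive but not $\mathbb{K}$-linear in general because of the $q^{\partial}$ term in Notation~\ref{not::LieOperator}; this is harmless, since $q^{\partial}$ has the same support as $q$ and hence the extra monomials already satisfy the inductive bound.
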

\begin{proof}
 The corollary follows by an $i$-fold application of Lemma \ref{lem::monom-bound-param} to the polynomial $f$ of degree $d_{\bmu}$ in variables $\bmu$
 and degree $d_{\bx}$ in variables $\bx$.
\end{proof}

\section{Auxiliary facts from algebraic geometry}
This section collects several algebraic geometry lemmas which will be used for the proof of Theorem \ref{th::bound-param} in Section \ref{sec:proof2}.

Throughout the section $\mathbb{A}^n$ stands for the $n$-dimensional affine space over the fixed algebraically closed field.

\begin{lemma} \label{lem:dominate}
  Let $m,n$ and $k$ be positive integers such that $n = m+k$. Let
  $X \subset \mathbb{A}^{n} = \mathbb{A}^m \times \mathbb{A}^k$ be an
 equidimensional variety of dimension $D$ and let
  $\pi : \mathbb{A}^{n} \rightarrow \mathbb{A}^{k}$ be the projection onto the
  last $k$ coordinates. 
  Denote $Y := \overline{\pi(X)}$ and suppose that $Y$
  is equidimensional of dimension $d \leqslant D$. 
  Then, for a generic affine space
  $L$ in $\mathbb{A}^m$ of codimension $D - d$, we have that $X \cap L$
  projects dominantly to $Y$ and $\dim (X \cap L) = \dim Y$.
\end{lemma}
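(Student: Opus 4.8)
The plan is to establish the two assertions separately: that the projection $X \cap L \to Y$ is dominant, and that $\dim(X \cap L) = d$. Throughout I interpret $X \cap L$ as $X \cap (L \times \mathbb{A}^k)$, i.e. as the preimage $p^{-1}(L)$ of $L$ under the first projection $p \colon X \to \mathbb{A}^m$, and I set $c := D - d$ so that $L$ has dimension $m - c$. The map $\pi|_X \colon X \to Y$ is dominant by the very definition $Y = \overline{\pi(X)}$, so the whole difficulty is to cut $X$ down to dimension $d$ by imposing $c$ generic linear conditions on the $\mathbb{A}^m$-coordinates without destroying dominance onto $Y$.

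The first ingredient is a generic fibre-dimension computation. Since $X$ is equidimensional of dimension $D$ and $Y$ is equidimensional of dimension $d$, there is a dense open $U \subseteq Y$ over which every fibre $X_y := \pi^{-1}(y) \cap X$ is nonempty of pure dimension $c = D - d$: components of $X$ whose image has dimension $< d$ contribute nothing over a general point of $Y$, while each component dominating a $d$-dimensional component of $Y$ contributes a fibre of pure dimension $D - d$ by the theorem on the dimension of fibres. Crucially, each $X_y$ lives inside the slice $\mathbb{A}^m \times \{y\} \cong \mathbb{A}^m$, on which $p$ restricts to this identification.

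The second ingredient is the behaviour of generic linear sections of these fibres. Writing $L = H_1 \cap \dots \cap H_c$ as an intersection of $c$ generic affine hyperplanes of $\mathbb{A}^m$, I would use the standard fact that a general hyperplane section of a pure $e$-dimensional affine variety is pure of dimension $e-1$ and nonempty whenever $e \geq 1$. Iterating this $c$ times on $X_y$ (pure of dimension $c$) shows that $X_y \cap L$ is a nonempty finite set for general $y$. Consequently $X \cap L$ has nonempty fibres over a dense subset of $Y$, hence projects dominantly onto $Y$, and its dimension is $d + 0 = d$. Note that at every intermediate cut the relevant fibres have dimension at least $1$, so one only ever invokes the clean positive-dimensional case of the hyperplane-section statement, avoiding any delicate nonemptiness issue at complementary dimension. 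Moreover, cutting any component of $X$ (even one with image of dimension $< d$) by $c$ general pullback hyperplanes produces pieces of dimension at most $D - c = d$, which gives the matching upper bound $\dim(X \cap L) \leq d$.

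The main obstacle is that \emph{general} must mean a single choice of $L$ that works simultaneously for a dense open set of $y \in Y$, rather than a choice depending on the individual fibre $X_y$. I would resolve this by performing the section on the total space rather than fibrewise: cut $X$ successively by the pullbacks $p^{-1}(H_1), \dots, p^{-1}(H_c)$, and after each cut use a relative Bertini-type statement together with Chevalley's constructibility theorem and upper semicontinuity of fibre dimension to show that the locus of good $y$ stays open and dense, that $\dim X$ drops by exactly one, and that dominance onto $Y$ is preserved. After $c$ steps this produces a single $L$ of codimension $c$ for which $X \cap L$ dominates $Y$ and has dimension $d$.
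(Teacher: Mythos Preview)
Your proposal is correct and follows essentially the same approach as the paper: both argue by induction on $D-d$, cutting $X$ by one generic affine hyperplane in $\mathbb{A}^m$ at a time and showing that each cut drops the dimension by exactly one while preserving dominance onto $Y$. The paper is somewhat more explicit than your sketch in tracking equidimensionality component by component (choosing witness points on each component to guarantee the hyperplane meets it properly), which is what allows the induction hypothesis to be reapplied, whereas you package the same content as a ``relative Bertini-type statement'' plus semicontinuity; but the substance of the argument is the same.
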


\begin{proof}
  We work by induction over the quantity $D-d$, if $D = d$ there is nothing to prove. 
  
  Choose an irreducible component $Z$ of $Y$ and an irreducible
  component $W$ of $X$ projecting dominantly to $Z$. 
  By \cite[Theorem
  1.25(ii)]{shafarevich1994basic} there is
  an open subset $U$ of $Z$ such that for all $\by^{\ast}\in U$ we have
  $\dim(\pi^{-1}(\by^{\ast})\cap W) = D - d$. 
  Viewing
  $\pi^{-1}(\by^{\ast}) \cap W$ as a variety in $\mathbb{A}^m$, for a
  hyperplane $H$ in $\mathbb{A}^m$ chosen from an open subset $\widetilde{U}$ of
  all hyperplanes in $\mathbb{A}^m$ we have that $H$ intersects
  $\pi^{-1}(\by^{\ast}) \cap W$ transversally so that
  $\dim(\pi^{-1}(\by^{\ast}) \cap W\cap H) = D - d - 1$. In particular there is
  $\bx^{\ast}\in W$ such that $H(\bx^{\ast})\neq 0$. Since $W$ is irreducible
  this implies that $W\cap H$ is equidimensional of dimension $D-1$. This implies, using the theorem of
  dimension of base and fiber \cite[Theorem 10.10]{eisenbudCommutativeAlgebra1995}, that we have
  \[
  \dim(\pi(W\cap H)) \geqslant \dim(W\cap H) - \dim(\pi^{-1}(\by^{\ast}) \cap W\cap H) = (D - 1) - (D - d - 1) = d.
  \]
  Hence, by the irreducibility of $Z$, $\pi(W\cap H)$ is
  dense in $Z$. 
  Potentially shrinking the open subset $\widetilde{U}$ out of which
  $H$ is chosen, these assertions hold simultaneously for all
  irreducible components $W$ of $X$ projecting dominantly to $Z$ with
  the same choice of $H\in \widetilde{U}$. 
  Potentially shrinking $\widetilde{U}$ further, these
  assertions hold simultaneously for all irreducible components $W$ of $X$ that project dominantly to some component of $Y$ with the same choice of $H\in \widetilde{U}$.

  In order to ensure the dimensionality drop on the whole $X$, we
  choose a point $\bx^{\ast}_{W}\in W$ for each $W$ an irreducible component of $X$ that does not project
  dominantly to some irreducible component of $Y$. 
  After potentially shrinking
  $\widetilde{U}$ we have $H(\bx^{\ast}_{W})\neq 0$ for all $H\in \widetilde{U}$ and all such $W$. 
  By construction, for each component $W$ of $X$ either $H$ intersects $W$ transversally or $W\cap H= \emptyset$. 
  This implies that $X\cap H$ is equidimensional of dimension $D - 1$
  and projects dominantly to $Y$. 
  We now conclude using the induction
  hypothesis, replacing $X$ with $X\cap H$.
\end{proof}

\begin{lemma}
\label{lem:fiberdim}
  Let $m,n$ and $k$ be positive integers such that $n = m+k$. Let
  $X \subset \mathbb{A}^{n} = \mathbb{A}^m \times \mathbb{A}^k$ be a
 variety of dimension $d$ and let
  $\pi : \mathbb{A}^{n} \rightarrow \mathbb{A}^{k}$ be the projection onto the
  last $k$ coordinates. 
  Denote $Y := \overline{\pi(X)}$ and suppose that $Y$
  is equidimensional of dimension $d$. 
  Then there is a Zariski open dense subset $U$ of $Y$
  such that for all $\by^{\ast}\in U$ we have that the fiber $\pi^{-1}(\by^{\ast})\cap X $ is finite and non-empty.
\end{lemma}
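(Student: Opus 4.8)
The plan is to reduce to the irreducible case by decomposing $X$ into irreducible components and applying the theorem on the dimension of fibers of a dominant morphism (\cite[Theorem~1.25(ii)]{shafarevich1994basic}, already used in Lemma~\ref{lem:dominate}) to each component that dominates a component of $Y$. First I would write $X = X_1 \cup \cdots \cup X_p$ as its irreducible components and set $Y_i := \overline{\pi(X_i)}$, so that each $Y_i$ is irreducible and $Y = \bigcup_i Y_i$. Calling an index $i$ \emph{good} if $\dim Y_i = d$ and \emph{bad} otherwise, the crucial observation supplied by the hypotheses is that for a good $i$ one has $d = \dim Y_i \leqslant \dim X_i \leqslant \dim X = d$ (dominance gives the first inequality), so $\dim X_i = d$ and the generic fiber of $\pi\colon X_i \to Y_i$ is $0$-dimensional, i.e. finite. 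This is exactly where equidimensionality of $Y$ enters: every irreducible component $C$ of $Y$ has dimension $d$, and since $C \subseteq \bigcup_i Y_i$ with $C$ irreducible we get $C \subseteq Y_i$ for some $i$, forcing $Y_i = C$ and this $i$ to be good. Thus every component of $Y$ is dominated by a good component of $X$ of the same dimension $d$.

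Next I would produce, for each good $i$, an open dense subset $U_i \subseteq Y_i$ over which $\pi^{-1}(\by^{\ast}) \cap X_i$ is finite and non-empty, using the fiber-dimension theorem together with the fact that the constructible set $\pi(X_i)$ contains a dense open of $Y_i$ (this is what gives non-emptiness). I would then assemble a single bad locus $B \subseteq Y$ as the union of three families of closed sets of dimension $< d$: the sets $Y_i$ for bad $i$; the pairwise intersections $C \cap C'$ of distinct components of $Y$; and the complements $Y_i \setminus U_i$ for good $i$. Each of these is closed in $Y$ of dimension strictly less than $d$, so $B$ is closed of dimension $< d$ and, by equidimensionality, contains no component of $Y$; hence $U := Y \setminus B$ is open and dense.

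Finally I would verify that $U$ works. For $\by^{\ast} \in U$, avoiding the pairwise intersections forces $\by^{\ast}$ to lie on a unique component $C$ of $Y$; avoiding the bad $Y_i$ and the other components kills $\pi^{-1}(\by^{\ast}) \cap X_i$ for every $i$ with $Y_i \neq C$; and avoiding $Y_i \setminus U_i$ for the good $i$ with $Y_i = C$ ensures the surviving fibers are finite and non-empty. Therefore $\pi^{-1}(\by^{\ast}) \cap X$ is a finite union of finite non-empty sets, which is what we want.

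The main obstacle I anticipate is organizational rather than geometric: the fiber-dimension theorem is stated for dominant morphisms of \emph{irreducible} varieties, so the real work lies in patching the component-wise open sets into a single open dense $U \subseteq Y$ while simultaneously ruling out spurious contributions from components of $X$ that fail to dominate the relevant component of $Y$ and preserving non-emptiness. The equidimensionality hypothesis on $Y$ is precisely what makes this patching possible, as it prevents a component of $Y$ from being dominated only by higher-dimensional pieces of $X$, which would otherwise produce positive-dimensional fibers.
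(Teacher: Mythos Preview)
Your proposal is correct and follows essentially the same approach as the paper: decompose $X$ into irreducible components, apply the fiber-dimension theorem \cite[Theorem~1.25(ii)]{shafarevich1994basic} to the components that dominate a component of $Y$ (your ``good'' indices), and discard the rest on a proper closed subset. Your bookkeeping via the explicit bad locus $B$ (including pairwise intersections of components of $Y$) is in fact slightly more careful than the paper's version, which organizes the argument by components of $Y$ and takes a union of the resulting open sets $U_Z$.
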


\begin{proof}
Without loss of
  generality, we may replace $X$ by the union $X'$ of all irreducible
  components of $X$ that project dominantly to some irreducible
  component of $Y$. Indeed, for a general $\by^{\ast}\in Y$ we have
  $\pi^{-1}(\by^{\ast})\cap X = \pi^{-1}(\by^{\ast})\cap X'$.

  Let $Z$ be an irreducible component of $Y$. Each irreducible
  component $W$ of $X$ projecting dominantly to $Z$ necessarily has dimension $d$, and so by \cite[Theorem
  1.25(ii)]{shafarevich1994basic}, there is an open subset $U_W$ of
  $Z$ such that for every $\by^{\ast}\in U_W$ we have
  $\dim(\pi^{-1}(\by^{\ast}) \cap W) = 0$. After possibly shrinking $U_W$ the fiber $\pi^{-1}(\by^{\ast}) \cap W$ is in addition non-empty by \cite[Lemma 4.3(i)]{hong2020global}.
  Let $U_Z$ be the intersection
  of all such $U_W$. We then have
  $\dim(\pi^{-1}(\by^{\ast}) \cap X) = 0$ for all
  $\by^{\ast}\in U_Z$. Taking the union over all $U_Z$, with $Z$ running
  over the irreducible components of $Y$, we obtain a dense open
  subset $U$ of $Y$ such that for every $ \by^{*}\in U$ we have
  $\dim(\pi^{-1}(\by^{\ast}) \cap X) = 0$ for all $\by^{\ast}\in U$ and $\pi^{-1}(\by^{\ast}) \cap X$ is non-empty.
\end{proof}

\begin{definition}
    For a polynomial $f(\mathbf{x}) \in \mathbb{K}[x_1, \ldots, x_m]$ of degree $d$, we define its \emph{homogenization in $\mathbf{x}$} using an additional variable $z$ as
	\begin{equation*}
			f^h(x_1, \ldots, x_m, z) := z^{d} f(\dfrac{x_1}{z}, \ldots, \dfrac{x_m}{z}). 
	\end{equation*}
\end{definition}

\begin{lemma} \label{lem::degree-new} Let $m$, $k$ be positive
  integers and $\bx = [x_1, \ldots, x_{m}]^T$,
  $\by = [y_1, \ldots, y_{k}]^T$. 
  Denote by $\pi_{\by}$ the projection map
  corresponding to the inclusion
  $\mathbb{K}[\by]\hookrightarrow \mathbb{K}[\bx,\by]$. 
  Let
  $p_1, \ldots, p_{m+1}$ be polynomials in $\mathbb{K}[\bx, \by]$ such that
  $\deg_{\bx} p_i := d_{ix} \in \mathbb{Z}_{> 0}$ and
  $\deg_{\by} p_i := d_{iy} \in \mathbb{Z}_{\geqslant 0}$.  Suppose in addition that the ideal 
  $I:=(p_1,\dots,p_{m+1})$ has dimension $k-1$, suppose that
  $I \cap \mathbb{K}[\by] = (g)$ for some square-free $g\in \mathbb{K}[\by]$.
  
  Then,
  \begin{equation} \label{eq::mult-bound-new}
    \deg g \leqslant \sum_{i=1}^{m + 1} d_{iy} \prod\limits_{j=1, j \neq i}^{m+1} d_{ix}. 
  \end{equation}
\end{lemma}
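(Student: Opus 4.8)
The plan is to interpret $\deg g$ geometrically as the degree of the projection $Y := \overline{\pi_{\by}(X)}$ of $X := V(I) \subset \mathbb{A}^{m+k}$, and to bound it by the number of isolated points obtained by slicing $X$ with generic hyperplanes pulled back from the $\by$-coordinates, estimating that number via the multihomogeneous B\'ezout theorem for the splitting of the variables into the two groups $\bx$ and $\by$. Concretely, since the elimination ideal $I \cap \mathbb{K}[\by] = (g)$ cuts out $\overline{\pi_{\by}(X)}$ and $g$ is square-free, $Y = V(g)$ is equidimensional of dimension $k-1$ with $\deg Y = \deg g$. As $\dim X = \dim I = k-1 = \dim Y$, Lemma~\ref{lem:fiberdim} applies and furnishes a dense open subset $U \subseteq Y$ over which the fibers $\pi_{\by}^{-1}(\by^\ast) \cap X$ are finite and non-empty. (Note that Lemma~\ref{lem:dominate} contributes nothing here, precisely because the dimensions $D$ and $d$ already coincide.)

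Next I would choose generic affine-linear forms $\ell_1, \ldots, \ell_{k-1} \in \mathbb{K}[\by]$ and set $L := V(\ell_1, \ldots, \ell_{k-1}) \subset \mathbb{A}^k_{\by}$, a generic affine line. For a generic choice, $Y \cap L$ consists of exactly $\deg Y$ distinct points, all lying in $U$. Pulling $L$ back through $\pi_{\by}$, the intersection $X \cap \pi_{\by}^{-1}(L) = V(p_1, \ldots, p_{m+1}, \ell_1, \ldots, \ell_{k-1})$ maps under $\pi_{\by}$ into $Y \cap L \subseteq U$, so it is the union of the fibers $\pi_{\by}^{-1}(\by^\ast) \cap X$ over the finitely many points $\by^\ast \in Y \cap L$. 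By the previous step each such fiber is finite and non-empty, whence $X \cap \pi_{\by}^{-1}(L)$ is a finite set containing at least $\deg Y$ points. This gives the lower estimate $\deg g = \deg Y \leqslant \# V(p_1, \ldots, p_{m+1}, \ell_1, \ldots, \ell_{k-1})$.

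Finally I would bound the number of isolated affine solutions of this square system of $m+k$ equations in the $m+k$ variables $(\bx, \by)$ by the multihomogeneous B\'ezout number for the bigrading $(\deg_{\bx}, \deg_{\by})$. After bihomogenizing (adjoining one coordinate for $\bx$ and one for $\by$), each $p_i$ has bidegree $(d_{ix}, d_{iy})$ and each $\ell_j$ has bidegree $(0,1)$; since an isolated affine solution stays isolated in $\mathbb{P}^m \times \mathbb{P}^k$, the number of such solutions is at most the coefficient of $a^m b^k$ in $\prod_{i=1}^{m+1}(d_{ix} a + d_{iy} b) \cdot b^{k-1}$. Factoring out $b^{k-1}$, this equals the coefficient of $a^m b$ in $\prod_{i=1}^{m+1}(d_{ix} a + d_{iy} b)$, namely $\sum_{i=1}^{m+1} d_{iy} \prod_{j \neq i} d_{jx}$, which is exactly the asserted right-hand side. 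Chaining the two estimates yields $\deg g \leqslant \sum_{i=1}^{m+1} d_{iy} \prod_{j \neq i} d_{jx}$.

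I expect the main obstacle to be the genericity bookkeeping together with the correct form of the B\'ezout estimate: one must ensure a single choice of $L$ simultaneously makes $Y \cap L$ a transverse intersection of cardinality $\deg Y$ and keeps all its points inside the open set $U$ of Lemma~\ref{lem:fiberdim}, and one must justify that every affine point of $X \cap \pi_{\by}^{-1}(L)$ is a genuinely isolated point of the multiprojective solution set, so that the multihomogeneous B\'ezout count applies as a genuine upper bound for our (possibly non-generic) system rather than only for generic ones.
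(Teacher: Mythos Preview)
Your proposal is correct and follows essentially the same approach as the paper's proof: apply Lemma~\ref{lem:fiberdim} to obtain finite non-empty fibers over a dense open $U\subset \mathbb{V}(g)$, intersect with a generic affine line $L\subset\mathbb{A}^k_{\by}$ meeting $\mathbb{V}(g)$ in exactly $\deg g$ points of $U$, and bound $\#(X\cap\pi_{\by}^{-1}(L))$ by the multihomogeneous B\'ezout number for the $(\bx,\by)$-bigrading. The two genericity issues you flag are handled by the paper in exactly the way you outline (citing~\cite{cox1997ideals} and~\cite{morgan1987homotopy}), and your observation that Lemma~\ref{lem:dominate} is not needed here is also in line with the paper, which invokes it only later in the proof of Theorem~\ref{th::bound-param}.
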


\begin{proof}
Let $X = \mathbb{V}(I)$. 
Denote by $\pi_{\by}\colon \mathbb{A}^{m + k} \to \mathbb{A}^k$ the projection map
  corresponding to the inclusion
  $\mathbb{K}[\by]\hookrightarrow \mathbb{K}[\bx,\by]$. 
We will find an affine line $L$ in $\mathbb{A}^k$ such that
\begin{align*}
  (1)& \quad \forall \by^\ast \in L \cap \mathbb{V}(g)\colon \pi_{\by}^{-1}(\by^{\ast}) \cap X \text{ is finite and nonempty},\\
  (2) & \quad \# (L \cap \mathbb{V}(g)) = \deg g.
\end{align*}
We note that $\overline{\pi(X)} = \mathbb{V}(I \cap \mathbb{K}[\by]) = \mathbb{V}(g)$, 
so $\pi(X)$ is equidimensional of dimension $k - 1$. By Lemma \ref{lem:fiberdim} there exists a Zariski open dense subset $U$
of $\pi(X)$ such that for all $\by^{\ast} \in U$ the fiber $\pi^{-1}(\by^{\ast}) \cap \mathbb{V}(I)$
is finite and non-empty.

Since $g$ is a nonzero square-free polynomial, by \cite[ Chapter 9, \S 4, Exercise 12]{cox1997ideals} the second property is true for a generic $L$ (that is, it holds on a nonempty open subset in the set of lines).
After possibly shrinking this open subset of lines, $L$ intersects $\mathbb{V}(g)$ only in $U$ and so the first property holds as well. 

Fix one such $L$.
Due to the finiteness and nonemptyness of the fibers, the set $L \cap \mathbb{V}(I)$ is finite and its cardinality is at least $\#(L \cap \mathbb{V}(g)) = \deg g$. 
The cardinality of $L \cap \mathbb{V}(I)$ is bounded by the number
of isolated solutions of the bihomogeneous
systems of equations consisting of $p_1^h=0,\dots,p_{m+1}^h=0$, where $\bullet^h$ denotes bihomogenization with respect to the variables $\bx$ and $\by$, and the homogenized (only in $\by$) defining equations of $L$. 
This number is in turn bounded by the multi-homogeneous B\'ezout bound associated to this system \cite[p. 106]{morgan1987homotopy}. 
Thus, we have  
\[
  \deg g = \#(L \cap \mathbb{V}(g)) \leqslant \coeff_{v^{m} u^{k}} P(u, v),
\]
 where $P(v, u) = u^{k-1}\prod\limits_{i=1}^{m+1} (d_{ix} v + d_{iy} u )$.
To find the coefficient that corresponds to the monomial $v^{m} u^{k}$ we choose exactly $m$ of $m + 1$ factors in the product of linear forms
to contribute the term $d_{ix} v$ and the remaining one to contribute the term $d_{iy} u$. 
This gives the desired
\[ 
\deg g = \#(\mathbb{V}(g) \cap L)\leqslant \sum_{i=1}^{m + 1} d_{iy} \prod\limits_{j=1, j \neq i}^{m+1} d_{ix}.
\]
\end{proof}

\section{Proof of the bound}\label{sec:proof2}

\begin{proof}[Theorem \ref{th::bound-param}]
  Let us denote by $\nu$ the order of the minimal polynomial $f_{\min}$ and recall that $\nu \leqslant n$ by \cite[Theorem 3.16 and Corollary 3.21]{hong2020global}.
  By Lemma \ref{lem::ideal-equivalence}, we get 
  \[ 
  J := (y - f, y' - \mathcal{L}_{\bg}(f) \ldots, y^{(\nu)} - \mathcal{L}_{\bg}^{\nu}(f)) \cap \mathbb{K}[\bmu, y^{(\leqslant \nu)}] = I_{\bg, f} \cap \mathbb{K}[\bmu, y^{(\leqslant \nu)}] 
  \]
  and $I_{\bg, f} \cap \mathbb{K}[\bmu, y^{(\leqslant \nu)}] = (f_{\min})$. Thus, the ideal $J = (f_{\min})$
  is prime and principal.
 
 Consider some $[\omega_1, \dots, \omega_{\nu}]^T \in \mathbb{Z}_{\geqslant 0}^{\nu}$ (to be specified later) and  define a $\mathbb{K}$-algebra homomorphism $\varphi : \mathbb{K}[\bmu,\bx,  y^{(\leqslant \nu)}] \mapsto \mathbb{K}[\bmu,\bx, z_0, z_1, \ldots, z_{\nu}]$, such that
  \begin{align*}
    \mu_i & \mapsto \mu_i, \\
    x_i & \mapsto x_i, \\
    y^{(i)} & \mapsto p_i(z_i), \; p_i \in \mathbb{K}[z_i] \text{ and }\deg p_i(z_i) = \omega_i.
  \end{align*}
  According to \cite[Lemma 4]{mukhina2025projecting} we can choose $p_i$ such that $\tilde{f}_{\min} := \varphi(f_{\min})$ is square-free.
  We define $\tilde{f}_i := \varphi(y^{(i)} - \mathcal{L}_{\bg}^{i}(f))$ for $0 \leqslant i \leqslant \nu$. 
  By Corollary \ref{cor::monom-bound-param} for every $0 \leqslant i \leqslant \nu$ we have  $\deg_{ \bmu} \mathcal{L}_{\bg}^{i}(f) \leqslant d_{\bmu} + i D_{\bmu}$ and $\deg_{\bx} \mathcal{L}_{\bg}^{i}(f) \leqslant d_{\bx} + i(D_{\bx} - 1).$
  Thus, 
  \[ \deg_{\by} \tilde{f}_k \leqslant \omega_k,\;\; \deg_{\bmu} \tilde{f}_k  \leqslant d_{\bmu} + k D_{\bmu}\; \text{ and } \;\deg_{ \bx} \tilde{f}_k  \leqslant  d_{\bx} + (k -1) D_{\bx}.\]

The rest of the proof will be divided into three cases corresponding to different inequalities among \eqref{eq::bound_multi}-\eqref{eq::bezout_y_only}.     
    \paragraph{Case 1.} Let $\omega_i = d_{\bmu} + i D_{\bmu}$ for $i = 1, \ldots, \nu$. 
    Then, $\deg_{\bmu, \by} \tilde{f}_k \leqslant d_{\bmu} + k D_{\bmu}$.

    Consider the ideal $\tilde{I} := (\tilde{f}_0, \ldots, \tilde{f}_{\nu})$.
    We fix the monomial ordering on $\mathbb{K}[\bmu, \bx, \bz]$ to be the lexicographic monomial ordering with 
      \[ z_{\nu} > \ldots > z_1 > z_0 > x_n > \ldots > x_1 > \mu_1 > \ldots > \mu_{r}.\]
    Then $\initial(\tilde{f}_i) = c_i z_i^{\omega_i}$ for some $c_i \in \mathbb{K}$ and
    $\initial(\tilde{f}_0), \ldots, \initial(\tilde{f}_{\nu})$ form a regular sequence. 
    Thus, by \cite[Pr. 15.15]{eisenbudCommutativeAlgebra1995} $\tilde{f}_0, \ldots, \tilde{f}_{\nu}$
    form a regular sequence as well.  
    Since $\tilde{f}_0, \ldots, \tilde{f}_{\nu}$ is a regular sequence, 
     $\dim \tilde{I} = 2 n + r - \nu$ and the ideal $\tilde{I}$ is equidimensional.
    We denote by $\pi : \mathbb{A}^{2n + r + 1} \rightarrow \mathbb{A}^{n+r+1}$ the projection onto $[\bmu, \bz]^T$
    coordinates and let $Y := \pi(\mathbb{V}(\tilde{I})) = \mathbb{V}(\tilde{I} \cap \mathbb{K}[\bmu, \bz]) =  \mathbb{V}(\tilde{f}_{\min})$.
    Then $Y$ is equidimensional of dimension $n + r$.
    By Lemma \ref{lem:dominate} for a generic affine space $L$ in $\mathbb{A}^{n}$ of codimension $n - \nu$, the projection of
    $\mathbb{V}(\tilde{I}) \cap L$ to $Y$ is dominant and $\dim (\mathbb{V}(\tilde{I}) \cap L) = \dim Y = n + r$. 
    We can rewrite $L$ as $\mathbb{V}(h_1, \ldots, h_{n-\nu})$ for some polynomials $h_i$ in $\mathbb{K}[\bx]$ of degree one.
    Thus, the ideal $\tilde{I} + I(L) = (\tilde{f}_0, \ldots, \tilde{f}_{\nu}, h_1, \ldots, h_{n-\nu})$
    has dimension $n + r$.
    Applying  Lemma \ref{lem::degree-new}
     with $p_i = \tilde{f}_i$ for $0 \leqslant i \leqslant \nu$, $p_{\nu + j} = h_j$ for $1 \leqslant j \leqslant n - \nu$
    with the two sets of variables $G_1 = [\bx]$ and $G_2 = [\bmu, \bz]^T$, we obtain
  \[
  \deg \tilde{f}_{\min} \leqslant
  \sum\limits_{i=0}^{n + 1}d_{iy} \prod\limits_{j = 0, j \neq i}^{n + 1}d_{ix},
  \]
  here $d_{iy} = \deg_{\bmu, \bz} p_i$ and $d_{ix} = \deg_{\bx} p_i$. 
  We note that
  \[
    d_{iy} = \begin{cases}
        d_{\bmu} + i D_{\bmu}, \text{ for } 0 \leqslant i \leqslant \nu,\\
        0, \text{ for } i > \nu
    \end{cases}\quad\text{ and }\quad
    d_{ix} = \begin{cases}
        d_{x} + i (D_{x} - 1), \text{ for } 0 \leqslant i \leqslant \nu,\\
        1, \text{ for } i > \nu
    \end{cases}
  \]
  Hence, we have
  \[
  \deg \tilde{f}_{\min} \leqslant
  \sum\limits_{i=0}^{\nu}(d_{\bmu} + i D_{\bmu}) \prod\limits_{j = 0, j \neq i}^{\nu}(d_{\bx} + j(D_{\bx}-1)).
  \]
  By applying the homomorphism $\varphi$ to a monomial $m = \bmu^{\bm{\ell}} y^{e_0} (y')^{e_1} \ldots (y^{(\nu)})^{(e_{\nu})}$ in the support 
  of $f_{\min}$ we get $\varphi(m) = c \bmu^{\bm{\ell}}z_0^{\omega_0 e_0} z_1^{\omega_1 e_1} \ldots z_{\nu}^{\omega_{\nu} e_{\nu}} + q(\bmu, \bz)$
  with $c \in \mathbb{K}^\ast$ and $\deg q < \sum_{i = 1}^{r} \ell_i + \sum_{i=0}^{\nu} \omega_i e_i$. 
  Using the established degree bound for
  $\tilde{f}_{\min}$, we obtain \eqref{eq::bound_multi}:
  \[ \sum_{i = 1}^{r} \ell_i  + \sum_{i=0}^{\nu} \omega_i e_i = \sum_{i = 1}^{r} \ell_i  + \sum_{i=0}^{\nu} (d_{\bmu} + i D_{\bmu}) e_i \leqslant 
  \sum\limits_{i=0}^{\nu}(d_{\bmu} + i D_{\bmu}) 
      \prod\limits_{j = 0, j \neq i}^{\nu}(d_{\bx} + j(D_{\bx} - 1)). \]

     \paragraph{Case 2.} Now we consider $\omega_i = d_{\bx} + i (D_{\bx} - 1)$. In this case $\deg_{\bmu} \tilde{f}_k \leqslant d_{\bmu} + k D_{\bmu}$ and $\deg_{\bx, \bz} \tilde{f}_k \leqslant d_{\bx} + k (D_{\bx} - 1)$ and, thus, the total degree 
     \[
    \deg \tilde{f}_k \leqslant d_{\bx} + d_{\bmu} + k(D_{\bx} + D_{\bmu} - 1).
    \]
    Using \cite[Lemma 5]{mukhina2025projecting} with $p_i = \tilde{f}_i$
  we get $\deg \tilde{f}_{\min} \leqslant \prod_{i=0}^{\nu} \deg \tilde{f}_i$.

  By applying the homomorphism $\varphi$ to a monomial $m = \bmu^{\bm{\ell}} y^{e_0} (y')^{e_1} \ldots (y^{(\nu)})^{(e_{\nu})}$ in the support 
  of $f_{\min}$ we get $\varphi(m) = c \bmu^{\bm{\ell}} z_0^{\omega_0 e_0} z_1^{\omega_1 e_1} \ldots z_{\nu}^{\omega_{\nu} e_{\nu}} + q(\bmu, \bz)$
  with $c \in \mathbb{K}^\ast$ and $\deg q < \sum_{i = 1}^{r} \ell_i + \sum_{i=0}^{\nu} \omega_i e_i$.
  Using the established degree bound for
  $\tilde{f}_{\min}$, we obtain \eqref{eq::bound_bezout_global}:
  \[ \sum_{i = 1}^{r} \ell_i +\sum_{i=0}^{\nu} \omega_i e_i = \sum_{i = 1}^{r} \ell_i + \sum_{i=0}^{\nu} (d_{\bx} + i(D_{\bx}-1)) e_i \leqslant \prod_{i=0}^{\nu} (d_{\bx} + d_{\bmu} + i(D_{\bx} + D_{\bmu} - 1)). \]

  \paragraph{Case 3.} 
  Let us consider the system over the field $\mathbb{K}(\bmu)$, that is, the parameters will be a part of the coefficient field.
  Then we can apply the result of the previous case having $r = 0$, $d_{\bmu} = D_{\bmu} = 0$.
  This gives us precisely \eqref{eq::bezout_y_only}.
\end{proof}

\section{Algorithm}\label{sec:algorithm}

In this section, we show how the bounds from Theorem \ref{th::bound-param} and Corollary \ref{cor::bound} can be used to compute the minimal differential equation for $y = f(\bmu, \bx)$ for a system of differential equations of the form
\begin{equation} \label{eq::ODE}
    \bx' = \bg(\bmu, \bx),
\end{equation}
where $\bx = [x_1, \ldots, x_n]^T$, $\bmu = [\mu_1, \ldots, \mu_r]^T$, $\bg \in \KK[\bmu, \bx]^n$, and $f \in \KK[\bmu, \bx]$. 

Our algorithm \cite[Algorithm 1, 2]{mukhina2025projecting} together with the proofs of the termination and correctness extends straightforwardly to the more general case we consider in this paper once a bound on the support of the output has been established.
Therefore, instead of repeating the formal description of the algorithm and technical details from \cite{mukhina2025projecting}, we will explain how it works with the new bound on a simple example.

 \begin{algorithmic}[1]
    \Require  A system
    \begin{equation*}
        \begin{cases}
            x_1'  = x_1 + 8 x_2,\\
            x_2'  = 7 x_1 + x_2,\\
            y = x_1 + x_2.
        \end{cases}
    \end{equation*}
    \Ensure the minimal polynomial $f_{\min} \in (x_1' - x_1 - 8 x_2, x_2'  - 7 x_1 - x_2)^{(\infty)} \cap\mathbb{Q}[y^{(\infty)}]$
    
    \State Compute the order of $f_{\min}$ via $\nu :=  \operatorname{rank}(\frac{\partial}{\partial x_j}\mathcal{L}_{\bg}^{i - 1}(f))_{i, j = 1}^2 = 2$ (see Notation \ref{not::LieOperator}).

    \State Compute the support $S$ of $f_{\min}$ using Corollary \ref{cor::bound} and $\nu$: $S := \{1, y, y', y'' \}$.

    \State Make an anzats $f_{\min} = \gamma_1 + \gamma_2 y + \gamma_3 y' + \gamma_4 y''$.

    \State Choose random points $p_1, p_2, p_3, p_4 \in \mathbb{Q}^2$.
        
    \State Express $y, y'$ and $y''$ via $x_1, x_2$: 
    \[y = x_1 + x_2, \quad y' = x_1' + x_2' = 8 x_1 + 9 x_2, \quad y'' = 8 x_1' + 9 x_2' = 71 x_1 + 73 x_2.\]

    \State Evaluate $y(p_i), y'(p_i), y''(p_i)$ for each $1 \leqslant i \leqslant 4$ and plug in $f_{\min}$.

    \State Solve the resulting linear system
    \[ \begin{pmatrix}
        1& 3 & 26 & 217\\
        1& 11 & 95 & 795\\
        1& 16 & 139 & 1158\\
        1& 12 & 105 & 870\\
        \end{pmatrix}
        \begin{pmatrix}
            \gamma_1\\
            \gamma_2\\
            \gamma_3\\
            \gamma_4
        \end{pmatrix} = 
        \begin{pmatrix}
            0\\
            0\\
            0\\
            0
        \end{pmatrix}
            \]
    The solution space is spanned by $\gamma_1 = 0, \gamma_2 = - 55, \gamma_3 = -2, \gamma_4 = 1$.

   \Return $f_{\min} = -55 y - 2 y' + y''$.
 \end{algorithmic}

\begin{remark}
  If the input system has parameters (i.e. $r \neq 0$), then in Step 2 of our algorithm, we replace the bound from Corollary \ref{cor::bound} with that from Theorem \ref{th::bound-param}. 
  Moreover, if $r = 0$, and $y$ is equal to one of $\bx$, we instead use the bound given in \cite[Theorem 1]{mukhina2025projecting}.
\end{remark}

\section{Implementation and performance}\label{sec:implementation}

We used the new bounds given by Theorem \ref{th::bound-param} and Corollary \ref{cor::bound} as described in Section \ref{sec:algorithm} to extend the proof-of-concept implementation, DiffMinPoly, of a differential elimination algorithm from \cite{mukhina2025projecting} (based on Oscar \cite{OSCAR}, Nemo \cite{10.1145/3087604.3087611}, and Polymake \cite{gawrilow2000polymake} libraries) to the larger class of systems considered in this paper. 
The source code and instructions for this new version of software together with the models used in this section are publicly available at

\begin{center}
    \url{https://github.com/ymukhina/Loveandsupport/tree/y-input}
\end{center}

The goal of the present section is to show that this implementation can perform differential elimination in reasonable time on commodity hardware for some instances which are out of reach for the existing state-of-the-art software thus pushing the limits of what can be computed.
Note that we are not aiming here at comprehensive benchmarking of differential elimination algorithms and we have deliberately chosen benchmarks allowing us to highlight the advantages of the present method. We discuss the limitations of our approach at the end of the section.

We will use four sets of models:
\begin{itemize}
    \item \emph{Dense models}.
    For fixed $n, D, d$ we define $\operatorname{Dense}_n(D, d)$
    to be a system of the form $\bx' = \bg(\bx),\;  y = f(\bx)$, where the dimension of $\bx$ is $n$, $f$ is a random dense polynomial of degree $d$ and $g_1, \ldots, g_n$ are random dense polynomials of degree $D$, where the coefficients are sampled independently uniformly at random from $[-100,100]$.
    Here is, for example, an instance of $\operatorname{Dense}_2(1,2)$:
    \begin{align*}
        x_1' &=  - 29 x_1 + 43 x_2 - 5,\\
        x_2' &= 5 x_1 - 87 x_2 - 36,\\
        y &= 32 x_1^2 - 16 x_1 x_2 + 8 x_1 - 92 x_2^2 + 3 x_2 + 67.
    \end{align*}
    
    \item \emph{$\mu_0$--Dense models}.
    For fixed $n, D_{\bx}, d_{\bx}, |\bmu|,$ we define $\operatorname{\mu_0--Dense}_n(D_{\bx}, d_{\bx}, |\bmu|)$ 
    to be a system of the form $\bx' = \bg(\bmu, \bx), \; y = f(\bmu, \bx)$, where 
    \begin{itemize}
        \item the dimension of $\bx$ is $n$;
        \item $f$ is a random dense polynomial with degree $d_{bx}$ in $\bx$ and degree $0$ in $\bmu$;
        \item $g_1, \ldots, g_n$ are random dense polynomials of degree $D_{\bx}$ in $\bx$ and degree $1$ in $\bmu$;
        \item all polynomial coefficients are sampled independently uniformly at random from $[-100,100]$.
    \end{itemize}
   For example, an instance of $\operatorname{\mu_0--Dense}_2(1,2,1)$:
    \begin{align*}
        x_1' &= 37 a_1 x_1 - 9 x_1 - 28 a_1 x_2 + 52 x_2 + 73 a_1 - 46,\\
        x_2' &= 69 x_1 a_1 - 43 x_1 - 36 a_1 x_2 + 91 x_2 + 79 a_1 - 69,\\
        y &= 31 x_1^2 - 34 x_1 x_2 + 60 x_1 - 74 x_2^2 + 96 x_2 + 58.
    \end{align*}

    \item \emph{$\mu_1$--Dense models} are defined in the same way as $\mu_0$--Dense models with the only difference that $f$ is of degree 1 in $\bmu$.

    \item \emph{Competing species with nonlinear observations.} 
    We will start with the following parametric model used, for example, to model populational dynamics of competing species:
    \begin{equation}\label{eq:LV}
    \begin{cases}
        x_1' = x_1(a_1 + a_2 x_1 + a_3 x_2),\\
        x_2' = x_2(b_1 + b_2 x_1 + b_3 x_2).
    \end{cases}
    \end{equation}
    We will consider the following test cases involving this model:
    \begin{itemize}
        \item[CS1:] We will take all the parameters except for $a_3$ and $b_2$ (inter-species interaction rates) to be random numbers from $\{\frac{1}{10}, \ldots, \frac{10}{10}\}$ and set $a_3 = -b_2 = a$ to be an unknown parameter.
        The minimal differential equation will be computed for the nonlinear observation function $y = a_2 x_1^2 + b_3 x_2^2$ corresponding to the intra-species interactions.

        \item[CS2-3:] We use the model \eqref{eq:LV} now with all the parameters to be fixed to random scalars from $\{\frac{1}{10}, \ldots, \frac{10}{10}\}$.
        We will compute minimal differential equation for nonlinear observations following the power law $y = x_1^3 + x_2^2$ for CS2 and $y = x_1^4 + x_2^4$ for CS3.
        We do not claim any biological interpretation for these observation functions, they are used as examples of sparse nonlinear expressions.
    \end{itemize}
\end{itemize}

The specific randomly generated instances used for the experiments can be found in the repository.
For comparison, we used the following software packages allowing to perform (among other things) differential elimination: DifferentialThomas \cite{bachler2012algorithmic} (part of Maple, we used Maple 2023), DifferentialAlgebra \cite{diffalg} (written in C++ and Python, we used version 4.1), and StructuralIdentifiability \cite{Dong2023} (written in Julia, we used version 0.5.12). All computations were performed on a single core of an Apple M2 Pro processor with 32 GB of memory.

Tables \ref{tab:non-param}, \ref{tab:case-1}, and \ref{tab:case-2} report the performance of the selected software tools for computing the minimal polynomial in the non-parametric case, in the case with $d_{\bmu} = 0$, $D_{\bmu} = 1$, and in the case with $d_{\bmu} = D_{\bmu} = 1$, respectively.

\begin{table}[htbp!]
\centering
	\begin{tabular}{ l| c c c c } 
	\hline
    Name & SI.jl & Diff.Thomas & Diff.Algebra & DiffMinPoly (our) \\ 
	\hline
    
    $\operatorname{Dense}_2(4,2)$ & 414 & $>$ 2h & RE & 11\\
    
    $\operatorname{Dense}_2(2,3)$ & OOM & $>$ 2h  & RE & 3\\
    
    $\operatorname{Dense}_2(3,3)$ & OOM & $>$ 2h & OOM & 44\\

    $\operatorname{Dense}_2(2,4)$ & OOM & $>$ 2h & OOM & 92\\

    $\operatorname{Dense}_2(3,4)$ & OOM & $>$ 2h & OOM & 991 \\

    CS2 & 2198 & $>$ 2h & OOM & 0.8 \\

    CS3 & OOM & $>$ 2h & OOM &  14\\
    
    \hline
    
\end{tabular}
\caption{Runtimes in non-parametric case (in seconds if not written explicitly)\\ \emph{OOM = ``out of memory'', RE = ``runtime error''}}\label{tab:non-param}
\end{table}

\begin{table}[htbp!]
\centering
	\begin{tabular}{ l| c c c c } 
	\hline
    Name & SI.jl & Diff.Thomas & Diff.Algebra & DiffMinPoly (our) \\ 
	\hline
    $\operatorname{\mu_0--Dense}_2(1,3,1)$ & OOM & $>$ 2h & OOM & 69\\

    $\operatorname{\mu_0--Dense}_2(2,2,1)$ & 1332 & $>$ 2h & OOM & 21\\

    $\operatorname{\mu_0--Dense}_2(3,2,1)$ & OOM & $>$ 2h & OOM & 1208\\

     $\operatorname{\mu_0--Dense}_2(2,2,2)$ & OOM &  $>$ 2h & OOM & 2426\\

    CS1 & 128 &  $>$ 2h & OOM & 3\\
    \hline  
\end{tabular}
\caption{Runtimes for $d_{\bmu} = 0, D_{\bmu} = 1$ (in seconds if not written explicitly)\\ \emph{OOM = ``out of memory''}}\label{tab:case-1}
\end{table}

Tables \ref{tab:non-param}-\ref{tab:case-2}  show that our algorithm can significantly outperform the state-of-the-art methods on appropriate benchmarks.
On the other hand, we must mention the following two important limitations of the current version of our algorithm:
\begin{itemize}
    \item The systems used for benchmarking in this section are dense or moderately sparse. 
    If the level of sparsity is more substantial, in particular, if the degrees of the polynomials in $\bg$ vary, the bound becomes too conservative, and other methods perform better.
    One way to mitigate this issue is to take into account more detailed information on the supports of $\bg$ and $f$, promising preliminary results in this direction are reported in \cite[Section 5]{mohr2025computationnewtonpolytopeseliminants}. 

    \item Similarly, if the number of parameters increases (as in applications to structural identifiability), the bound becomes too conservative as well.
    In other words, the current approach does not take into account the sparsity with respect to the parameters.
    One possible workaround is, again, to refine a bound taking the sparsity into account.
    An alternative is to use sparse polynomial interpolation to reconstruct the coefficients of the eliminant with respect to $y^{(\infty)}$ by evaluating the parameters at appropriate linear forms of a single parameter.
\end{itemize}

\begin{table}[htbp!]
\centering
	\begin{tabular}{ l| c c c c } 
	\hline
    Name & SI.jl & Diff.Thomas & Diff.Algebra & DiffMinPoly (our) \\ 
	\hline
    $\operatorname{\mu_1--Dense}_2(1,3,1)$ & OOM &  $>$ 2h & OOM & 351 \\
    $\operatorname{\mu_1--Dense}_2(2,2,1)$ & 5147 & $>$ 2h  & OOM & 107 \\
    $\operatorname{\mu_1--Dense}_2(1,2,2)$ & 764 &  $>$ 2h & OOM & 54 \\

    \hline  
\end{tabular}
\caption{Runtimes for  $d_{\bmu} = D_{\bmu} = 1$ (in seconds if not written explicitly)\\ \emph{OOM = ``out of memory''}}\label{tab:case-2}
\end{table} 

\section{Conclusion}

We present an evaluation-interpolation approach to computing the minimal differential equation for an important case of the differential elimination problem.
Namely, for polynomial parametric dynamical systems with polynomial observations.
We do this by establishing a bound for the Newton polytope of such a minimal equation. 
Numerical data from computational experiments show that the predicted number of terms is often very close to the actual number.
Our approach allows to efficiently perform elimination for realistic systems by avoiding expression swell often jeopardizing the performance of the state of the art methods.
We provide a publicly available implementation of our algorithm.

In the parametric case, while the bound is relatively accurate for one or two parameters, it becomes too conservative in the realistic scenario with multiple parameters.
One workaround would be to reduce to the case of fewer parameters by additional evaluation-interpolation on the level of coefficients. 
We leave this question for future research.
Another trait of models appearing in the modeling literature is their sparsity and, in particular, the fact that often not all the equations have the same degree.
Thus, another important challenge would be to refine the bound to take this information into account.

\bibliographystyle{splncs04}
\bibliography{mybibliography}

\end{document}